\newcommand\relphantom[1]{\mathrel{\phantom{#1}}}
\newcommand{\tabincell}[2]{\begin{tabular}{@{}#1@{}}#2\end{tabular}}
\begin{document}

\newtheorem{theorem}{Theorem}[section]
\newtheorem{corollary}[theorem]{Corollary}
\newtheorem{definition}[theorem]{Definition}
\newtheorem{proposition}[theorem]{Proposition}
\newtheorem{lemma}[theorem]{Lemma}
\newtheorem{example}[theorem]{Example}
\newenvironment{proof}{\noindent {\bf Proof.}}{\rule{3mm}{3mm}\par\medskip}
\newcommand{\remark}{\medskip\par\noindent {\bf Remark.~~}}

\title{A class of cyclic codes whose dual have five zeros}
\author{Yan Liu\footnote{Corresponding author, Dept. of Math., SJTU, Shanghai, 200240,  liuyan0916@sjtu.edu.cn.}, Chunlei Liu\footnote{Dept. of Math., Shanghai Jiaotong Univ., Shanghai,
200240, clliu@sjtu.edu.cn.}}

\date{}
\maketitle
\thispagestyle{empty}

\abstract{In this paper, a family of  cyclic codes over $\mathbb{F}_{p}$  whose duals have five zeros is presented, where  $p$ is  an odd prime. Furthermore, the weight distributions of these cyclic codes are determined. }

\noindent {\bf Key words and phrases:} cyclic code, quadratic form, weight distribution.

\noindent {\bf MSC:} 94B15, 11T71.

\section{\small{INTRODUCTION}}
Recall that an $[n,l,d]$ linear code $\mathcal{C}$ over $\mathbb{F}_{p}$ is a linear subspace of $\mathbb{F}_{p}^{n}$ with dimension $l$ and minimum Hamming distance $d$. Let $A_{i}$ denote the number of codewords  in $\mathcal{C}$ with Hamming weight $i$.
%The   weight enumerator of $\mathcal{C}$ is defined by $1+A_{1}Z+A_{2}Z^{2}+\cdots+A_{n}Z^{n}$.
The sequence $(A_{0}, A_{1}, A_{2},\ldots, A_{n})$ is called the weight distribution of the code $\mathcal{C}$.
And $\mathcal{C}$ is called cyclic if for any $(c_{0}, c_{1},  \ldots, c_{n-1}) \in \mathcal{C}$, also $(c_{n-1}, c_{0},  \ldots, c_{n-2}) \in \mathcal{C}$.
%By identifying any vector $(c_{0}, c_{1},  \ldots, c_{n-1}) \in \mathbb{F}_{p}^{n}$  with $c_{0}+ c_{1}x+\cdots+c_{n-1}x^{n-1} \in \mathbb{F}_{p}[x]/(x^{n}-1)$, any cyclic code corresponds to an ideal of the
A linear code $\mathcal{C}$ in $\mathbb{F}_{p}^{n}$ is cyclic if and only if $\mathcal{C}$ is an ideal of the polynomial residue class ring $\mathbb{F}_{p}[x]/(x^{n}-1)$. Since $\mathbb{F}_{p}[x]/(x^{n}-1)$ is a principal ideal ring, every cyclic code corresponds to a principal ideal $(g(x))$ of  the multiples of a polynomial $g(x)$ which is the monic polynomial of lowest degree in the ideal. This polynomial  $g(x)$ is called the generator polynomial, and $h(x)=(x^{n}-1)/g(x)$ is called the parity-check polynomial of the code $\mathcal{C}$. We also  recall that a cyclic code is called irreducible if its parity-check polynomial  is irreducible over $\mathbb{F}_{p}$ and reducible, otherwise. And a cyclic code over $\mathbb{F}_{p}$ is said to have $t$ zeros if all the zeros of the generator polynomial of the code form $t$ conjugate classes, or equivalently, the generator polynomial has $t$ irreducible factors over $\mathbb{F}_{p}$. Determining the weight distribution of a cyclic code has been the interesting subject of study for many years.
 %
 %Since the weight distribution of a linear code gives its minimum distance and thus the error capability of it.
% Determining the weight distribution of a linear code is an important research object in coding theory.
% For cyclic codes, the error correcting capability may not be as good as with some other linear codes in general. However,  because of  their good algebraic
%structure, the weight distributions of some cyclic codes can be determined by algebraic techniques, exponential sums for example. Besides, cyclic codes have wide applications in storage and %communication systems because they have efficient encoding and decoding algorithms. Moreover, cyclic codes are applied in association schemes \cite{3} and secret schemes \cite{4}.Therefore, %%determining the weight distributions of cyclic codes is not only a problem of theoretical interest, but also of practical importance.
 %have been interesting subjects of study for many years and are very hard problem in general.
For information on the weight distributions of irreducible cyclic codes, the reader is referred to \cite{1,2,5,6}. Information on the weight distributions of reducible cyclic codes could be found in \cite{7,9,10,12,13,14,15,18,19,20,22,28}. In this paper, we will determine the weight distributions of a class of reducible cyclic codes whose duals have five zeros.

Throughout this paper, let $m \geq 5$ be an odd integer and $k$ be a positive integer  such that  gcd$(m,k)=1$. Let $p$ be  an odd prime and $\pi$ be a primitive element of the finite field $\mathbb{F}_{p^{m}}$.

Let $\pi$ be a primitive element of the finite field $\mathbb{F}_{q}$.
Let $h_{0}(x)$, $h_{-0}(x)$, $h_{1}(x)$,$h_{-1}(x)$ and $h_{2}(x)$ be the minimal polynomials of $\pi^{-1}$,$-\pi^{-1}$, $\pi^{-\frac{p^{k}+1}{2}}$, $-\pi^{-\frac{p^{k}+1}{2}}$ and $\pi^{-\frac{p^{2k}+1}{2}}$ over $\mathbb{F}_{p}$, respectively. It is easy to check that $h_{0}(x)$, $h_{-0}(x)$, $h_{1}(x)$,$h_{-1}(x)$ and $h_{2}(x)$ are  polynomials of degree $m$ and are pairwise distinct.  The cyclic codes over  $\mathbb{F}_{p}$ with  parity-check polynomial $h_{0}(x)h_{1}(x)$ have been extensively studied by \cite{4,14,22,23}. Zhengchun Zhou and Cunsheng Ding \cite{19} proved the cyclic codes over  $\mathbb{F}_{p}$ with  parity-check polynomial $h_{-0}(x)h_{1}(x)$ have three nonzero weights and determined its weight distributions. And in \cite{16}, the authors proved  the cyclic codes over  $\mathbb{F}_{p}$ with  parity-check polynomial $h_{0}(x)h_{-0}(x)h_{1}(x)$ have six nonzero weights and determined their weight distribution. Let $\mathcal{C}_{(p,m,k)}$ be the cyclic code with parity-check polynomial $h_{0}(x)h_{-0}(x)h_{1}(x)h_{-1}(x)h_{2}(x)$.  In this paper, we will determine the weight distribution of the cyclic code $\mathcal{C}_{(p,m,k)}$.

From now on, we always assume that $\lambda$ is a fixed non-square element in $\mathbb{F}_{p}$. Since $m$ is odd, it is also a non-square element in $\mathbb{F}_{p^{m}}$. Then  $\lambda x$ runs through all non-square elements of $\mathbb{F}_{p^{m}}$ as $x$ runs through all nonzero square elements of $\mathbb{F}_{p^{m}}$. In addition, we have the following result.
\begin{lemma}[\cite{19}]\label{Le:3.1}
$\lambda^{(1+p^{k})/2}=\lambda$ if $k$ is even, and $\lambda^{(1+p^{k})/2}=-\lambda$ otherwise.
\end{lemma}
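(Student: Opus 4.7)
The plan is to reduce the exponent $(1+p^{k})/2$ to a form where Euler's criterion for $\lambda \in \mathbb{F}_{p}^{*}$ can be applied directly. Since both $1$ and $p^{k}$ are odd, the integer $(1+p^{k})/2$ makes sense, and I would begin by splitting
\[
\lambda^{(1+p^{k})/2} = \lambda \cdot \lambda^{(p^{k}-1)/2}.
\]
So the task reduces to evaluating $\lambda^{(p^{k}-1)/2}$.

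Next I would use the factorization $p^{k}-1=(p-1)(1+p+p^{2}+\cdots+p^{k-1})$, which gives
\[
\frac{p^{k}-1}{2}=\frac{p-1}{2}\bigl(1+p+p^{2}+\cdots+p^{k-1}\bigr).
\]
Since $\lambda$ is a non-square element of $\mathbb{F}_{p}^{*}$, Euler's criterion yields $\lambda^{(p-1)/2}=-1$, and therefore
\[
\lambda^{(p^{k}-1)/2}=\bigl(\lambda^{(p-1)/2}\bigr)^{1+p+p^{2}+\cdots+p^{k-1}}=(-1)^{1+p+p^{2}+\cdots+p^{k-1}}.
\]

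The last step is a simple parity count: because $p$ is odd, each term $p^{i}$ is odd, so the exponent $1+p+p^{2}+\cdots+p^{k-1}$ is a sum of $k$ odd integers and has the same parity as $k$. Hence $\lambda^{(p^{k}-1)/2}=(-1)^{k}$, and substituting back gives $\lambda^{(1+p^{k})/2}=\lambda$ if $k$ is even and $-\lambda$ if $k$ is odd, as claimed. There is no real obstacle here; the only thing worth being careful about is applying Euler's criterion with the correct modulus $p-1$ (justified because $\lambda\in \mathbb{F}_{p}$) rather than $p^{m}-1$, which would not distinguish the two cases.
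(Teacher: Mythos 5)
Your proof is correct, and it is the standard argument for this fact: split off $\lambda^{(p^k-1)/2}$, apply Euler's criterion to the non-square $\lambda\in\mathbb{F}_p^{*}$, and count the parity of $1+p+\cdots+p^{k-1}$. The paper itself gives no proof, simply citing \cite{19}, and the proof there proceeds along essentially the same lines, so there is nothing to contrast.
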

The rest of this paper is  organized as follows. Some necessary results on quadratic forms will be introduced in Section 2. In Section 3, we will solve some systems of equations. A family of cyclic codes is presented and their weight distributions are determined in Section 4.

\section{\small{QUADRATIC FORMS OVER FINITE FIELDS}}
We follow the notation in Section 1. In this section, we will recall the definition of the quadratic forms over finite fields and some results about it.  In particular, we present the evaluation of a specific exponential sum  which is derived from the properties of the quadratic forms.  %Quadratic forms have been well studied $($see \cite{11} and the references therein$)$, and have application in design and coding theory.

By identifying $\mathbb{F}_{p^{m}}$ with the $m$-dimensional $\mathbb{F}_{p}$-vector space  $\mathbb{F}_{p}^{m}$,  a function $Q$ from $\mathbb{F}_{p^{m}}$ to $\mathbb{F}_{p}$ can be regarded as an $m$-variable polynomial on $\mathbb{F}_{p}$. Then $Q$ is called a quadratic form over $\mathbb{F}_{p}$ if its corresponding polynomial is a polynomial  of degree two over $\mathbb{F}_{p}$ and  can be represented as
\[Q(x_{1}, x_{2},\ldots, x_{m})=\sum_{1\leq i\leq j\leq m}a_{ij}x_{i}x_{j},
\]
where $a_{ij} \in \mathbb{F}_{p}$.
 The rank of the quadratic form $Q(x)$ is defined as the codimension of the $\mathbb{F}_{p}$-vector space $V=\{x \in \mathbb{F}_{p^{m}}: Q(x+z)-Q(x)-Q(z)=0~ for~ all ~z \in \mathbb{F}_{p^{m}} \}$.

For a quadratic form  $Q(x)$, there exists a symmetric matrix $A$ of order $m$ over $\mathbb{F}_{p}$ such that $Q(x)= XAX^{T}$, where $X=(x_{1}, x_{2},\ldots,x_{m})\in \mathbb{F}_{p}^{m}$ and $X^{T}$ denotes the transpose of $X$. Then there exists a nonsingular matrix $H$ of order $m$ over $\mathbb{F}_{p}$ such that $HAH^{T}$ is a diagonal matrix \cite{11}.
It is easy to check that the rank of the quadratic form  $Q(x)$ is exactly  the rank of $A$.
Under the nonsingular linear substitution $X=ZH$ with $Z=(z_{1}, z_{2},\ldots,z_{m})\in \mathbb{F}_{p}^{m}$,  $Q(x)=ZHAH^{T}Z^{T}=\sum_{i=1}^{r}d_{i}z_{i}^{2}$, where $r$ is the rank of $Q(x)$ and $d_{i} \in \mathbb{F}_{p}^{\ast}$. Let $\triangle= d_{1}d_{2}\cdots d_{r}$ (we assume $\triangle=0$ when $r=0$). We can recall that the Legendre symbol $(\frac{a}{p})$ has the value 1 if $a$ is a quadratic residue mod $p$, $-1$ if $a$ is a quadratic non-residue mod $p$, and zero if $p | a$. Then  $(\frac{\triangle}{p})$ is an invariant of $A$ under the action of $H \in GL_{m}(\mathbb{F}_{p})$. Then we introduce the following two lemmas.
\begin{lemma}[\cite{11}]
With the notation as above, we have
\[
\sum_{x \in \mathbb{F}_{p^{m}}}\zeta_{p}^{Q(x)}=\begin{cases}
           (\frac{\triangle}{p})p^{m-\frac{r}{2}},&  p \equiv 1\pmod4,\\
            (\frac{\triangle}{p})(\sqrt{-1})^{r}p^{m-\frac{r}{2}},& p \equiv 3\pmod4,
           \end{cases}
\]
for any quadratic form $Q(x)$ in $m$ variables of rank $r$ over $\mathbb{F}_{p}$, where $\zeta_{p}$ is a primitive $p$-th root of unity.
\end{lemma}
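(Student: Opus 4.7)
The plan is to exploit the diagonalization of $Q$ already recalled in the excerpt, which reduces the problem to a product of one-variable quadratic Gauss sums. First I would perform the nonsingular substitution $X=ZH$, which leaves the sum invariant because $Z\mapsto X$ is a bijection of $\mathbb{F}_p^m$, and rewrites the exponent as $\sum_{i=1}^r d_i z_i^2$. The variables $z_{r+1},\dots,z_m$ do not appear, so summing them out contributes a factor $p^{m-r}$, leaving
\[
\sum_{x\in\mathbb{F}_{p^m}}\zeta_p^{Q(x)}=p^{m-r}\prod_{i=1}^{r}\sum_{z\in\mathbb{F}_p}\zeta_p^{d_i z^2}.
\]

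Next I would invoke the standard identity for one-variable quadratic Gauss sums: for $d\in\mathbb{F}_p^{\ast}$,
\[
\sum_{z\in\mathbb{F}_p}\zeta_p^{dz^2}=\Bigl(\tfrac{d}{p}\Bigr)\,G,
\]
where $G=\sum_{z\in\mathbb{F}_p}\zeta_p^{z^2}$ is the quadratic Gauss sum. The key classical evaluation of $G$, which I would just quote, says $G=\sqrt{p}$ when $p\equiv 1\pmod 4$ and $G=\sqrt{-1}\,\sqrt{p}$ when $p\equiv 3\pmod 4$. Multiplying the $r$ factors produces $\bigl(\tfrac{d_1\cdots d_r}{p}\bigr)\,G^r=\bigl(\tfrac{\triangle}{p}\bigr)\,G^r$, and combining with the $p^{m-r}$ prefactor yields $\bigl(\tfrac{\triangle}{p}\bigr)\,G^r\,p^{m-r}$. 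Separating the two cases mod $4$ gives exactly the two branches in the statement, because $G^r p^{m-r}=p^{m-r/2}$ in the first case and $G^r p^{m-r}=(\sqrt{-1})^r p^{m-r/2}$ in the second.

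One auxiliary point that needs justification is that $\bigl(\tfrac{\triangle}{p}\bigr)$ is independent of the chosen diagonalizing $H$, so that the right-hand side is well defined. This is the content of the remark in the excerpt preceding the lemma, and it follows from the fact that a change of diagonalizing basis multiplies $\triangle$ by the square of the determinant of the transition matrix, which has trivial Legendre symbol. I would also handle the trivial case $r=0$ separately, where $Q\equiv 0$, the sum is $p^m$, and $\bigl(\tfrac{\triangle}{p}\bigr)$ is taken to be $1$ by convention (or the formula is read with the empty product convention).

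The only genuinely nontrivial ingredient is the evaluation of $G$, and since the paper is free to cite \cite{11}, the main obstacle is purely bookkeeping: correctly tracking the Legendre symbol through the change of variables and correctly combining the factors of $G$ into the stated closed form. There is no hard estimate or combinatorial argument involved; the entire proof is a direct reduction to the one-variable Gauss sum.
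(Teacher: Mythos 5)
Your proof is correct and is exactly the standard diagonalization-plus-Gauss-sum argument; the paper itself offers no proof, citing Lidl--Niederreiter, and that reference's proof proceeds the same way (reduce to $\sum_{z}\zeta_p^{d_iz^2}=\bigl(\tfrac{d_i}{p}\bigr)G$ and invoke the classical evaluation of $G$). Your handling of the well-definedness of $\bigl(\tfrac{\triangle}{p}\bigr)$ and of the degenerate case $r=0$ is also consistent with the conventions stated just before the lemma.
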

\begin{lemma} [\cite{19}, Lemma 2.2]\label{Le:2.2}
Let $Q(x)$ be a quadratic form in $m$ variables of rank $r$ over $\mathbb{F}_{p}$, then
 \[
  \sum_{y \in \mathbb{F}_{p}^{\ast}}\sum_{x \in \mathbb{F}_{p^{m}}}\zeta_{p}^{yQ(x)}=\begin{cases}
           \pm (p-1)p^{m-\frac{r}{2}},& \textrm{r even},\\
            0,& otherwise.
           \end{cases}
\]
\end{lemma}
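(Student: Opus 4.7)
The plan is to reduce the inner sum to the preceding lemma for each fixed $y \in \mathbb{F}_p^*$, and then evaluate the resulting outer sum via a standard quadratic character identity.

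First, I would observe that for any $y \in \mathbb{F}_p^*$ the map $x \mapsto yQ(x)$ is again a quadratic form in $m$ variables over $\mathbb{F}_p$. Because scaling by a nonzero constant leaves the radical $V = \{x \in \mathbb{F}_{p^m} : Q(x+z) - Q(x) - Q(z) = 0 \text{ for all } z\}$ unchanged, the form $yQ$ still has rank $r$. Moreover, if a nonsingular linear substitution puts $Q$ in the diagonal shape $\sum_{i=1}^{r} d_i z_i^2$, then $yQ$ becomes $\sum_{i=1}^{r} (y d_i) z_i^2$, whose discriminant invariant is $y^{r}\triangle$ with $\triangle = d_1 d_2 \cdots d_r$.

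Applying the previous lemma to $yQ$ then yields, in either case $p \equiv 1 \pmod 4$ or $p \equiv 3 \pmod 4$, an evaluation of the form
\[
\sum_{x \in \mathbb{F}_{p^{m}}} \zeta_{p}^{yQ(x)} \;=\; \left(\tfrac{y^{r}\triangle}{p}\right) c_{p,r}\, p^{m - r/2},
\]
where $c_{p,r} \in \{\pm 1,\pm\sqrt{-1}\}$ depends only on $p \bmod 4$ and on $r$, not on $y$. Summing over $y \in \mathbb{F}_p^{\ast}$ factors all of this out except for the character sum $\sum_{y \in \mathbb{F}_p^{\ast}} \left(\tfrac{y^{r}}{p}\right)$.

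The decisive step is then the observation that this character sum vanishes when $r$ is odd and equals $p-1$ when $r$ is even. Indeed, if $r$ is odd then $\left(\tfrac{y^{r}}{p}\right) = \left(\tfrac{y}{p}\right)$, and the orthogonality of the trivial character with the nontrivial quadratic character on $\mathbb{F}_p^{\ast}$ forces the sum to be zero, giving the second branch of the lemma. If $r$ is even then $\left(\tfrac{y^{r}}{p}\right) = 1$ for every $y$, so the outer sum collapses to $(p-1)\left(\tfrac{\triangle}{p}\right) c_{p,r}\, p^{m - r/2}$; since $c_{p,r} \in \{\pm 1\}$ when $r$ is even (as $(\sqrt{-1})^{r} = (-1)^{r/2}$), this is exactly $\pm(p-1)p^{m - r/2}$. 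I do not anticipate a serious obstacle: once the preceding lemma is available, the argument is pure bookkeeping combined with the standard fact that the nontrivial quadratic character sums to zero over $\mathbb{F}_p^{\ast}$.
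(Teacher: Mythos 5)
Your proof is correct and is exactly the standard derivation: the paper itself states this lemma without proof, citing Zhou--Ding, and the argument there is the same reduction you give --- replace $Q$ by $yQ$ (same rank, discriminant multiplied by $y^{r}$), apply the preceding Gauss-sum evaluation, and use that $\sum_{y\in\mathbb{F}_p^{\ast}}\bigl(\tfrac{y}{p}\bigr)=0$ when $r$ is odd while $\bigl(\tfrac{y^{r}}{p}\bigr)\equiv 1$ when $r$ is even. All the supporting details (invariance of the radical under scaling, $(\sqrt{-1})^{r}=(-1)^{r/2}\in\{\pm1\}$ for even $r$) check out.
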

For any fixed $(u,v,w) \in \mathbb{F}_{p^m}^{2}$, let $Q_{u,v,w}(x)=Tr(ux^{2}+vx^{p^{k}+1}+wx^{p^{2k}+1})$,
 %is a $\mathbb{F}_{p}$-linear function over  $\mathbb{F}_{p^{m}}$,
 where $Tr$ is the trace mapping from $\mathbb{F}_{p^{m}}$ to $\mathbb{F}_{p}$. Moreover, we have the following result.
\begin{lemma}[\cite{28}]\label{Le:2.3}
For any $(u,v,w) \in \mathbb{F}_{p^{m}}^{2}\backslash \{(0,0,0)\}$, $Q_{u,v,w}(x)$ is a quadratic form over $\mathbb{F}_{p}$ with rank at least $m-4$.
\end{lemma}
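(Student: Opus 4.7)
The plan is to compute the symmetric bilinear form associated with $Q_{u,v,w}$, identify the radical as the kernel of an explicit linearized map $L$, convert $L$ to a $p^k$-linearized polynomial $P$ of $p^k$-degree at most $4$, and bound $\dim_{\mathbb{F}_p}\ker L \le 4$ by Galois descent along $\gcd(k,m)=1$.

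I first expand $B(x,z) := Q_{u,v,w}(x+z)-Q_{u,v,w}(x)-Q_{u,v,w}(z)$ using $(x+z)^{p^i+1} = x^{p^i+1}+x^{p^i}z+xz^{p^i}+z^{p^i+1}$, and repeatedly apply $Tr(az^{p^i}) = Tr(a^{p^{m-i}}z)$ to gather everything into the form $B(x,z) = Tr(z\cdot L(x))$ with
\[ L(x) = 2ux + vx^{p^k} + v^{p^{m-k}}x^{p^{m-k}} + wx^{p^{2k}} + w^{p^{m-2k}}x^{p^{m-2k}}. \]
Nondegeneracy of the trace form identifies the radical with $\ker L$. Applying the Frobenius $y\mapsto y^{p^{2k}}$ to the equation $L(x)=0$ and using $x^{p^m}=x$ and $c^{p^m}=c$ for $x,c\in\mathbb{F}_{p^m}$, the condition becomes $P(x)=0$, where
\[ P(X) = wX + v^{p^k}X^{p^k} + 2u^{p^{2k}}X^{p^{2k}} + v^{p^{2k}}X^{p^{3k}} + w^{p^{2k}}X^{p^{4k}}. \]
So $P$ is a $p^k$-linearized polynomial of $p^k$-degree at most $4$; under $(u,v,w)\neq(0,0,0)$ it is not identically zero, since whichever of $u,v,w$ is nonzero forces at least one coefficient of $P$ to be nonzero.

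Over $\overline{\mathbb{F}_p}$, $P$ is $\mathbb{F}_{p^k}$-linear (because $\alpha^{p^{jk}}=\alpha$ for $\alpha\in\mathbb{F}_{p^k}$), so the root set $W\subseteq\overline{\mathbb{F}_p}$ is an $\mathbb{F}_{p^k}$-subspace with $\dim_{\mathbb{F}_{p^k}} W\le 4$. Since $\gcd(k,m)=1$, the compositum satisfies $\mathbb{F}_{p^{km}}\cong\mathbb{F}_{p^k}\otimes_{\mathbb{F}_p}\mathbb{F}_{p^m}$, and $H:=\text{Gal}(\mathbb{F}_{p^{km}}/\mathbb{F}_{p^m})$ restricts isomorphically onto $\text{Gal}(\mathbb{F}_{p^k}/\mathbb{F}_p)$. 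The finite piece $W_1 := W\cap\mathbb{F}_{p^{km}}$ is an $\mathbb{F}_{p^k}$-subspace of dimension at most $4$, and is $H$-stable because the coefficients of $P$ lie in $\mathbb{F}_{p^m}$ and so are fixed by $H$. The $H$-action on $W_1$ is semilinear over its $\mathbb{F}_{p^k}$-structure, so Galois descent yields $W_1 = \mathbb{F}_{p^k}\otimes_{\mathbb{F}_p}(W_1\cap\mathbb{F}_{p^m})$ and hence
\[ \dim_{\mathbb{F}_p}\ker L \;=\; \dim_{\mathbb{F}_p}(W_1\cap\mathbb{F}_{p^m}) \;=\; \dim_{\mathbb{F}_{p^k}} W_1 \;\le\; 4. \]

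The main obstacle is this last step: converting an $\mathbb{F}_{p^k}$-dimension bound, which is natural for a $p^k$-polynomial of $p^k$-degree $4$, into an $\mathbb{F}_p$-dimension bound over $\mathbb{F}_{p^m}$, even though $\mathbb{F}_{p^k}$ is not a subfield of $\mathbb{F}_{p^m}$ when $k>1$. The coprimality $\gcd(k,m)=1$ is precisely what makes the Galois descent collapse the two dimensions to the same integer, giving the sharp bound $\mathrm{rank}(Q_{u,v,w})\ge m-4$.
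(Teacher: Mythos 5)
Your proof is correct, and it follows the standard route for this lemma (which the paper itself does not prove but imports from \cite{28}): polarize $Q_{u,v,w}$ to get $B(x,z)=Tr(zL(x))$, raise $L(x)=0$ to the $p^{2k}$-th power to obtain a nonzero $p^k$-linearized polynomial $P$ of $p^k$-degree at most $4$, and bound the kernel. The one place where you add genuine value is the last step: the passage from $\dim_{\mathbb{F}_{p^k}}W\le 4$ to $\dim_{\mathbb{F}_p}\ker L\le 4$ is exactly where $\gcd(k,m)=1$ enters, and your Galois-descent (equivalently, linear-disjointness of $\mathbb{F}_{p^k}$ and $\mathbb{F}_{p^m}$ over $\mathbb{F}_p$) argument makes rigorous a point that the literature usually dispatches by appeal to general facts about $\sigma$-polynomials in the twisted ring $\mathbb{F}_{p^m}[\phi]$.
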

\section{\small{SOLUTIONS OF SOME SYSTEMS OF EQUATIONS }}
In this section, we will solve some systems of equations, which will be needed in the  subsequent section. Before introducing them, for any positive integer $k$, we define $d_{1}=p^{k}+1$ and $d_{2}=p^{2k}+1$.
\begin{lemma}\label{Le:3.9}
Let $N_{2}$ denote the number of solutions $(x_{1},x_{2},y_{1},y_{2}) \in \mathbb{F}_{p^{m}}^{2}\times \mathbb{F}_{p}^{\ast2}$ of the following system of equations
\begin{equation}\label{Eqs:3.1}
\begin{cases}
y_{1}x_{1}^{2}+ y_{2}x_{2}^{2}=0\\
y_{1}x_{1}^{d_{1}}+y_{2}x_{2}^{d_{1}}=0
\end{cases}
\end{equation}
Then $N_{2}=(p-1)^{2}p^{m}$.
\end{lemma}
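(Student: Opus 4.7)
The plan is to split the count according to whether $x_1,x_2$ vanish, reduce the nonzero case to a single equation on the ratio $x_1/x_2$, and use $\gcd(k,m)=1$ to pin down that ratio.

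First I would dispose of the degenerate cases. If $x_1=0$, the first equation forces $y_2x_2^2=0$, and since $y_2\in\mathbb{F}_p^*$ this gives $x_2=0$; symmetrically $x_2=0$ implies $x_1=0$. Hence either both $x_i$ vanish or neither does. The contribution from $x_1=x_2=0$ is exactly $(p-1)^2$, corresponding to the free choice of $(y_1,y_2)\in\mathbb{F}_p^{*2}$.

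Next, assume $x_1,x_2\in\mathbb{F}_{p^m}^*$. Solve the first equation for $y_2=-y_1x_1^2/x_2^2$ and substitute it into the second to obtain
\[
y_1\bigl(x_1^{d_1}x_2^2 - x_1^2 x_2^{d_1}\bigr)=0,
\]
which, after dividing by $x_1^2 x_2^2$, collapses to $(x_1/x_2)^{d_1-2}=1$, i.e.\ $(x_1/x_2)^{p^k-1}=1$. This is the only real content of the lemma. The set of $(p^k-1)$-th roots of unity in $\mathbb{F}_{p^m}^*$ has order $\gcd(p^k-1,p^m-1)=p^{\gcd(k,m)}-1=p-1$, so it is exactly $\mathbb{F}_p^*$. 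Thus $x_1=cx_2$ for some $c\in\mathbb{F}_p^*$, and then $y_2=-c^2y_1$ is determined by $(c,y_1)$.

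Counting the nonzero case: $(p^m-1)$ choices for $x_2$, $(p-1)$ for $c$ (hence for $x_1$), $(p-1)$ for $y_1$, with $y_2$ determined and automatically in $\mathbb{F}_p^*$; this yields $(p^m-1)(p-1)^2$ solutions. Adding the degenerate contribution gives $(p-1)^2+(p^m-1)(p-1)^2=(p-1)^2p^m$, as claimed. The only subtle step is the appeal to $\gcd(k,m)=1$ to force $x_1/x_2\in\mathbb{F}_p^*$; everything else is bookkeeping.
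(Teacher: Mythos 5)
Your proof is correct, and it takes a genuinely different route from the paper. The paper splits into four cases according to whether $y_{1}$ and $y_{2}$ are squares or non-squares in $\mathbb{F}_{p}^{\ast}$, absorbs the $y_{i}$ into the $x_{i}$ (using $t^{p^{k}+1}=t^{2}$ for $t\in\mathbb{F}_{p}$) to reduce to four fixed systems with coefficients $1$ or $\lambda$, and then cites Lemmas 3.4 and 3.5 of the reference [28] for those counts; the individual counts depend on $p\bmod 4$ but recombine to the uniform total $(p-1)^{2}p^{m}$. You instead eliminate $y_{2}$ directly, reduce the nonzero case to $(x_{1}/x_{2})^{p^{k}-1}=1$, and use $\gcd(p^{k}-1,p^{m}-1)=p^{\gcd(k,m)}-1=p-1$ to conclude $x_{1}/x_{2}\in\mathbb{F}_{p}^{\ast}$, after which $y_{2}=-c^{2}y_{1}$ lands in $\mathbb{F}_{p}^{\ast}$ automatically. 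Your argument is self-contained, avoids the external lemmas and the quadratic-residue case split, and makes it transparent why the answer is independent of $p\bmod 4$. As a bonus, your parametrization $x_{1}=cx_{2}$, $y_{2}=-c^{2}y_{1}$ with $c\in\mathbb{F}_{p}^{\ast}$ immediately yields the Remark following the lemma: such a solution also satisfies $y_{1}x_{1}^{d_{2}}+y_{2}x_{2}^{d_{2}}=y_{1}x_{2}^{d_{2}}(c^{p^{2k}+1}-c^{2})=0$, so appending the third equation does not change the count.
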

\begin{proof}
\begin{enumerate}[(1)]
  \item When $y_{1}$ and $y_{2}$ are both squares of $\mathbb{F}_{p}^{\ast}$, the number of the solutions $(x_{1},x_{2},y_{1},y_{2}) \in \mathbb{F}_{p^{m}}^{2}\times \mathbb{F}_{p}^{\ast2}$ of the system above is $\frac{(p-1)^{2}}{4}$ multiplied  by the number of the solutions  $(x_{1},x_{2}) \in \mathbb{F}_{p^{m}}^{2}$ of the following system of equations:
      \begin{equation*}
\begin{cases}
x_{1}^{2}+ x_{2}^{2}=0\\
x_{1}^{d_{1}}+x_{2}^{d_{1}}=0,
\end{cases}
\end{equation*}
which is  equal to the number of the following system of equations:
\begin{equation}\label{Eqs:3.10}
\begin{cases}
x_{1}^{2}+ x_{2}^{2}=0\\
x_{1}^{d_{1}}+ x_{2}^{d_{1}}=0\\
x_{1}^{d_{2}}+ x_{2}^{d_{2}}=0.
\end{cases}
\end{equation}
The number of the the solutions  $(x_{1},x_{2}) \in \mathbb{F}_{p^{m}}^{2}$ of the system (\ref{Eqs:3.10}) is $2p^{m}-1$ when $p \equiv 1 \pmod 4$ and $1$ otherwise by Lemma 3.4 in \cite{28}.
  \item  When $y_{1}$ is a square element but  $y_{2}$ is a non-square element of $\mathbb{F}_{p}^{\ast}$, the number of the solutions $(x_{1},x_{2},y_{1},y_{2}) \in \mathbb{F}_{p^{m}}^{2}\times \mathbb{F}_{p}^{\ast2}$ of the system above is $\frac{(p-1)^{2}}{4}$ multiplied  by the number of the solutions  $(x_{1},x_{2}) \in \mathbb{F}_{p^{m}}^{2}$ of the following system of equations:
      \begin{equation*}
\begin{cases}
x_{1}^{2}+\lambda x_{2}^{2}=0\\
x_{1}^{d_{1}}+\lambda x_{2}^{d_{1}}=0,
\end{cases}
\end{equation*}
which is equal to the number of the following system of equations:
\begin{equation}\label{Eqs:3.11}
\begin{cases}
x_{1}^{2}+\lambda x_{2}^{2}=0\\
x_{1}^{d_{1}}+\lambda x_{2}^{d_{1}}=0\\
x_{1}^{d_{2}}+\lambda x_{2}^{d_{2}}=0.
\end{cases}
\end{equation}
The number of the the solutions  $(x_{1},x_{2}) \in \mathbb{F}_{p^{m}}^{2}$ of the system (\ref{Eqs:3.11}) is  $1$ when $p \equiv 1 \pmod 4$ and $2p^{m}-1$  otherwise by Lemma 3.5 in \cite{28}.
  \item  The case of $y_{1}$ is a non-square element but  $y_{2}$ is a square element of $\mathbb{F}_{p}^{\ast}$ is equivalent to the above case.
  \item The case of  $y_{1}$ and $y_{2}$ are both non-squares of $\mathbb{F}_{p}^{\ast}$  is equivalent to case (1).
\end{enumerate}
Summarizing the discussion above, the  number of  solutions $(x_{1},x_{2},y_{1},y_{2}) \in \mathbb{F}_{p^{m}}^{2}\times \mathbb{F}_{p}^{\ast2}$ of the system (\ref{Eqs:3.1}) is
$(p-1)^{2}p^{m}$.
\end{proof}
\remark
From the proof of Lemma 3.4 and 3.5 in \cite{28} and the proof of Lemma \ref{Le:3.9}, we have that
the number of solutions $(x_{1},x_{2},y_{1},y_{2}) \in \mathbb{F}_{p^{m}}^{2}\times \mathbb{F}_{p}^{\ast2}$ of (\ref{Eqs:3.1}) is equal to the number of solutions of  the following system of equations
\begin{equation*}
\begin{cases}
y_{1}x_{1}^{2}+ y_{2}x_{2}^{2}=0\\
y_{1}x_{1}^{d_{1}}+y_{2}x_{2}^{d_{1}}=0\\
y_{1}x_{1}^{d_{2}}+y_{2}x_{2}^{d_{2}}=0.
\end{cases}
\end{equation*}
\begin{lemma}\label{Le:3.10}
Let $N_{3}$ denote the number of solutions $(x_{1},x_{2},x_{3},y_{1},y_{2},y_{3}) \in \mathbb{F}_{p^{m}}^{3}\times \mathbb{F}_{p}^{\ast3}$ of the following system of equations
\begin{equation}\label{Eqs:3.2}
\begin{cases}
y_{1}x_{1}^{2}+y_{2}x_{2}^{2}+y_{3} x_{3}^{2}=0\\
y_{1}x_{1}^{d_{1}}+y_{2}x_{2}^{d_{1}}+y_{3} x_{3}^{d_{1}}=0.
\end{cases}
\end{equation}
Then $N_{3}=(p-1)^{3}(p^{m+1}+p^{m}-p)$.
\end{lemma}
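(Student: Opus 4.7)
The plan is to follow the strategy used in the proof of Lemma~\ref{Le:3.9}, partitioning solutions by the squareness pattern of $(y_1,y_2,y_3)\in(\mathbb{F}_p^*)^3$, of which there are eight in total. Writing $y_i=\epsilon_i s_i^2$ with $\epsilon_i\in\{1,\lambda\}$ and $s_i\in\mathbb{F}_p^*$, the bijection $x_i\mapsto s_i^{-1}x_i$ on $\mathbb{F}_{p^m}$ replaces the coefficient $y_i$ by $\epsilon_i$ simultaneously in both equations, using the identity $s_i^{d_1}=s_i^{p^k+1}=s_i^2$, valid because $s_i\in\mathbb{F}_p$. Hence each pattern contributes a factor of $((p-1)/2)^3$ choices of $(y_1,y_2,y_3)$, multiplied by the number of $(x_1,x_2,x_3)\in\mathbb{F}_{p^m}^3$ satisfying the normalized system with coefficients $(\epsilon_1,\epsilon_2,\epsilon_3)\in\{1,\lambda\}^3$.

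Next, the obvious permutation symmetry in the three indices, together with the rescaling $(\epsilon_1,\epsilon_2,\epsilon_3)\mapsto(\lambda\epsilon_1,\lambda\epsilon_2,\lambda\epsilon_3)$ (corresponding to dividing both equations by $\lambda$), collapses the eight patterns into just two essentially distinct normalized systems: the uniform system
\[
x_1^2+x_2^2+x_3^2=0,\qquad x_1^{d_1}+x_2^{d_1}+x_3^{d_1}=0,
\]
arising from the patterns $(1,1,1)$ and $(\lambda,\lambda,\lambda)$, with $x$-count $M_A$; and the mixed system
\[
x_1^2+x_2^2+\lambda x_3^2=0,\qquad x_1^{d_1}+x_2^{d_1}+\lambda x_3^{d_1}=0,
\]
arising from the remaining six patterns, with $x$-count $M_B$. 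Assembling, one obtains $N_3=\frac{(p-1)^3}{4}(M_A+3M_B)$.

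The third step is to evaluate $M_A$ and $M_B$. Just as in the Remark following Lemma~\ref{Le:3.9}, one verifies that the equation $x_1^{d_2}+x_2^{d_2}+x_3^{d_2}=0$ (respectively with $\lambda$ on the third term) is a consequence of the first two in each of the normalized systems, so $M_A$ and $M_B$ agree with the solution counts of the corresponding augmented three-equation systems. These counts then fall within the scope of the three-variable versions of Lemmas~3.4 and~3.5 of \cite{28}. Substituting the resulting expressions into $N_3=\frac{(p-1)^3}{4}(M_A+3M_B)$ should yield the claimed value $(p-1)^3(p^{m+1}+p^m-p)$, equivalent to $M_A+3M_B=4(p^{m+1}+p^m-p)$.

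The principal obstacles I foresee are twofold. First, one must verify that the $d_2$-equation is really implied by the first two for the normalized systems, which requires a small polynomial manipulation along the lines of the Remark (and was the reason that remark was isolated). Second, by analogy with Lemma~\ref{Le:3.9} one expects $M_A$ and $M_B$ individually to split into cases according to $p\bmod 4$; the clean formula for $N_3$ will then emerge only after one checks that the specific combination $M_A+3M_B$ evaluates to the same expression $4(p^{m+1}+p^m-p)$ in both congruence regimes.
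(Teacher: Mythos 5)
Your proposal is correct and follows essentially the same route as the paper: the paper's own proof is a one-line reduction ("proved in a similar way as Lemma \ref{Le:3.9} by Lemmas 3.6 and 3.7 in \cite{28}"), and your case split by squareness patterns of $(y_1,y_2,y_3)$, the normalization via $x_i\mapsto s_i^{-1}x_i$ using $s_i^{d_1}=s_i^2$, the collapse to the two normalized systems, and the appeal to the three-variable counting lemmas of \cite{28} is exactly the omitted argument. The two caveats you flag (the $d_2$-equation being forced, and the $p\bmod 4$ dependence cancelling in $M_A+3M_B$) are precisely the points the paper delegates to \cite{28} and to its own remarks.
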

\begin{proof}
This lemma can be proved in a similar way as the proof of Lemma \ref{Le:3.9} by Lemma 3.6 and 3.7 in \cite{28}, so we omit the details.
\end{proof}
\remark
Similarly, we also have that
the number of solutions $(x_{1},x_{2},x_{3},y_{1},y_{2},y_{3}) \in \mathbb{F}_{p^{m}}^{3}\times \mathbb{F}_{p}^{\ast3}$ of (\ref{Eqs:3.2}) is equal to the number of solutions of  the following system of equations
\begin{equation*}
\begin{cases}
y_{1}x_{1}^{2}+ y_{2}x_{2}^{2}+ y_{3}x_{3}^{2}=0\\
y_{1}x_{1}^{d_{1}}+y_{2}x_{2}^{d_{1}}+y_{3}x_{3}^{d_{1}}=0\\
y_{1}x_{1}^{d_{2}}+y_{2}x_{2}^{d_{2}}+y_{3}x_{3}^{d_{2}}=0.
\end{cases}
\end{equation*}
\begin{lemma}\label{Le:3.12}
Let $N_{4}$ denote the number of solutions $(x_{1},x_{2},x_{3},x_{4},y_{1},y_{2},y_{3},y_{4}) \in \mathbb{F}_{p^{m}}^{4}\times \mathbb{F}_{p}^{\ast4}$ of the following system of equations
\begin{equation}\label{Eqs:3.3}
\begin{cases}
y_{1}x_{1}^{2}+y_{2}x_{2}^{2}+ y_{3}x_{3}^{2}+y_{4} x_{4}^{2}=0\\
y_{1}x_{1}^{d_{1}}+y_{2}x_{2}^{d_{1}}+y_{3} x_{3}^{d_{1}}+ y_{4}x_{4}^{d_{1}}=0.
\end{cases}
\end{equation}
Then $N_{4}=p^{m}(p^{m+1}+p^{m}-p)(p-1)^{4}$.
\end{lemma}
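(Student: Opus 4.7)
The plan is to adapt the case-analysis blueprint used for $N_2$ and $N_3$ to four indices. The key structural observation is that $z^{d_1}=z^{p^k+1}=z^2$ for every $z\in\mathbb{F}_p^*$; hence, writing $y_i=z_i^2$ when $y_i$ is a square and $y_i=\lambda z_i^2$ when $y_i$ is a non-square, the substitution $x_i'=z_ix_i$ absorbs $z_i$ simultaneously in both equations of \eqref{Eqs:3.3}, leaving a coefficient $c_i\in\{1,\lambda\}$ determined by the square/non-square status of $y_i$. Partitioning the $(p-1)^4$ tuples $(y_1,\ldots,y_4)\in\mathbb{F}_p^{\ast4}$ by the resulting sign-pattern gives $2^4=16$ classes, each of size $((p-1)/2)^4$, with all tuples in a given class contributing the same number of solutions $(x_1,\ldots,x_4)\in\mathbb{F}_{p^m}^{4}$ of
\begin{equation*}
\begin{cases}
c_1x_1^{2}+c_2x_2^{2}+c_3x_3^{2}+c_4x_4^{2}=0,\\
c_1x_1^{d_1}+c_2x_2^{d_1}+c_3x_3^{d_1}+c_4x_4^{d_1}=0.
\end{cases}
\end{equation*}

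Two further symmetries collapse these $16$ patterns to three. Permutation of the indices acts within each pattern and preserves the count; and multiplying both equations by $\lambda^{-1}$, followed by the rescaling $x_i\mapsto\pm\lambda\,x_i$ (which is admissible because $\lambda^{d_1}=\lambda^2$), exchanges $1$ and $\lambda$ in the tuple $(c_1,\ldots,c_4)$. The three resulting orbits have representatives $(1,1,1,1)$, $(1,1,1,\lambda)$, $(1,1,\lambda,\lambda)$, of sizes $2$, $8$, $6$. By the remark following Lemma \ref{Le:3.10}, each representative count is equal to the count for the enlarged three-equation system obtained by adjoining the analogous $d_2$-equation; this is exactly the shape of system counted by the four-variable analogs of Lemmas 3.4--3.7 of \cite{28}. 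Calling these counts $M_A$, $M_B$, $M_C$, one obtains
\[
N_4=\Bigl(\tfrac{p-1}{2}\Bigr)^{4}\bigl(2M_A+8M_B+6M_C\bigr).
\]

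To finish, invoke the four-variable lemmas of \cite{28} to evaluate $M_A$, $M_B$, $M_C$. As with the lower-dimensional cases, each of these splits into sub-formulas according to whether $p\equiv1$ or $3\pmod 4$. The main obstacle is to verify that the linear combination $2M_A+8M_B+6M_C$ collapses to a $p\bmod 4$--independent expression and simplifies to $16\,p^m(p^{m+1}+p^m-p)$, which yields the stated identity $N_4=p^m(p^{m+1}+p^m-p)(p-1)^4$. Although the individual sub-formulas depend on the residue of $p$, this cancellation mirrors what already occurs for $N_2$ and $N_3$, so the verification is routine arithmetic once the cited counts from \cite{28} are substituted.
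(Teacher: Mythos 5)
Your proposal is correct and follows essentially the same route as the paper, whose own proof of this lemma is just the one-line remark that it goes ``in a similar way as the proof of Lemma \ref{Le:3.9} by Lemmas 3.8--3.10 in \cite{28}'': you partition the $(y_1,\dots,y_4)$ by square/non-square pattern, absorb the square parts into the $x_i$ using $z^{d_1}=z^2$ for $z\in\mathbb{F}_p$, reduce the $16$ patterns to the three representatives $(1,1,1,1)$, $(1,1,1,\lambda)$, $(1,1,\lambda,\lambda)$ via permutation and the $\lambda$-rescaling symmetry, and then invoke the three four-variable counting lemmas of \cite{28}. You actually supply more structural detail (the orbit sizes $2$, $8$, $6$ and the final linear combination) than the paper does.
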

\begin{proof}
This lemma can be proved in a similar way as the proof of Lemma \ref{Le:3.9} by Lemmas 3.8-3.10 in \cite{28}, so we omit the details.
\end{proof}
\remark
Again, we have that
the number of solutions $(x_{1},x_{2},x_{3},x_{4},y_{1},y_{2},$ $y_{3},y_{4})\in \mathbb{F}_{p^{m}}^{4}\times \mathbb{F}_{p}^{\ast4}$ of (\ref{Eqs:3.3}) is equal to the number of solutions of  the following system of equations
\begin{equation*}
\begin{cases}
y_{1}x_{1}^{2}+ y_{2}x_{2}^{2}+ y_{3}x_{3}^{2}+ y_{4}x_{4}^{2}=0\\
y_{1}x_{1}^{d_{1}}+y_{2}x_{2}^{d_{1}}+y_{3}x_{3}^{d_{1}}+y_{4}x_{4}^{d_{1}}=0\\
y_{1}x_{1}^{d_{2}}+y_{2}x_{2}^{d_{2}}+y_{3}x_{3}^{d_{2}}+y_{4}x_{4}^{d_{2}}=0.
\end{cases}
\end{equation*}
\section{\small{THE CLASS OF REDUCIBLE CODES}}
We follow the notation fixed in the previous sections.
Let $\mathcal{C}_{(p,m,k)}$ be the cyclic code defined in Section 1.
 In this section, we will determine the weight distribution of this class of  cyclic codes. Obviously, $\mathcal{C}_{(p,m,k)}$ has length $p^{m}-1$ and dimension $5m$. Moreover, it can be expressed as \[\mathcal{C}_{(p,m,k)}=\{\mathbf{c}_{(a_{1},a_{2},b_{1},b_{2},c)}: a_{1},a_{2},b_{1},b_{2},c \in \mathbb{F}_{p^{m}}\},\] where \[\mathbf{c}_{(a_{1},a_{2},b_{1},b_{2},c)}=\big(Tr(a_{1}\pi^{t}+a_{2}(-\pi)^{t}+b_{1}\pi^{(p^{k}+1)t/2}+b_{2}(-\pi)^{(p^{k}+1)t/2}+c\pi^{(p^{2k}+1)t/2})\big)_{t=0}^{p^{m}-2}.\]

In terms of exponential sums, the weight of the codeword $\mathbf{c}_{(a_{1},a_{2},b_{1},b_{2},c)}=(c_{0}, c_{1},\ldots,$  $c_{p^{m}-2})$ in $\mathcal{C}_{(p,m,k)}$ is given by
\begin{equation*}
 \begin{split}
 &W(\mathbf{c}_{(a_{1},a_{2},b_{1},b_{2},c)})\\
   &= \#\{0\leq t\leq p^{m}-2: c_{t}\neq 0\}\\
   &= p^{m}-1-\frac{1}{p}\sum_{t=0}^{p^{m}-2}\sum_{y \in \mathbb{F}_{p}}\zeta_{p}^{yc(t)} \\
   &=  p^{m}-1-\frac{1}{p}\sum_{t=0}^{p^{m}-2}\sum_{y \in \mathbb{F}_{p}}\zeta_{p}^{yTr(a_{1}\pi^{t}+a_{2}(-\pi)^{t}+b_{1}\pi^{(p^{k}+1)t/2}+b_{2}(-\pi)^{(p^{k}+1)t/2}+c\pi^{(p^{2k}+1)t/2})} \\
   %&=   p^{m}-1- \frac{1}{p}\sum_{y \in \mathbb{F}_{p}}\sum_{t=0}^{(p^{m}-3)/2}\big(\zeta_{p}^{yTr((a+b)\pi^{2t+1}+c\pi^{(p^{k}+1)t})} +\zeta_{p}^{yTr((a-b)\pi^{2t}+c\pi^{\frac{p^{k}+1}{2}(2t+1)})}\big) \\
   %&=   p^{m}-1- \frac{1}{p}\sum_{y \in \mathbb{F}_{p}}\sum_{x \in SQ}\big(\zeta_{p}^{yTr((a+b)x+cx^{(p^{k}+1)/2})} +\zeta_{p}^{yTr((a-b)\pi x+c(\pi x)^{\frac{p^{k}+1}{2}})}\big) \\
  % &=   p^{m}-1- \frac{1}{p}\sum_{y \in \mathbb{F}_{p}}\sum_{x \in SQ}\big(\zeta_{p}^{yTr((a+b)x+cx^{(p^{k}+1)/2})} +\zeta_{p}^{yTr((a-b)\lambda x+c(\lambda x)^{\frac{p^{k}+1}{2}})}\big) \\
   &=   p^{m}-1- \frac{1}{2p}\sum_{y \in \mathbb{F}_{p}}\sum_{x \in \mathbb{F}_{p^{m}}^{\ast}}\big(\zeta_{p}^{yTr((a_{1}+a_{2})x^{2}+(b_{1}+b_{2})x^{p^{k}+1}+cx^{p^{2k}+1})}\\ &\relphantom{=} \phantom{{} p^{m}-1- \frac{1}{2p}\sum_{y \in \mathbb{F}_{p}}\sum_{x \in \mathbb{F}_{p^{m}}^{\ast}}\big( } +\zeta_{p}^{yTr((a_{1}-a_{2})\lambda x^{2}+(b_{1}-b_{2})\lambda^{\frac{p^{k}+1}{2}} x^{p^{k}+1}+c\lambda^{\frac{p^{2k}+1}{2}} x^{p^{2k}+1})}\big) \\
   &=   p^{m}-p^{m-1}- \frac{1}{2p}\sum_{y \in \mathbb{F}_{p}^{\ast}}\sum_{x \in \mathbb{F}_{p^{m}}}\big(\zeta_{p}^{yTr((a_{1}+a_{2})x^{2}+(b_{1}+b_{2})x^{p^{k}+1}+cx^{p^{2k}+1})}\\ &\relphantom{=} \phantom{{} p^{m}-1- \frac{1}{2p}\sum_{y \in \mathbb{F}_{p}}\sum_{x \in \mathbb{F}_{p^{m}}^{\ast}}\big( } +\zeta_{p}^{yTr((a_{1}-a_{2})\lambda x^{2}+(b_{1}-b_{2})\lambda^{\frac{p^{k}+1}{2}} x^{p^{k}+1}+c\lambda x^{p^{2k}+1})}\big).
 \end{split}
\end{equation*}
It then follows from Lemma \ref{Le:3.1} that $W(\mathbf{c}_{(a_{1},a_{2},b_{1},b_{2},c)})=p^{m}-p^{m-1}- \frac{1}{2p}S(a_{1},a_{2},b_{1},b_{2},c)$ when $k$ is even, where
\begin{equation}\label{Eq:3.1}
 \begin{split}
S(a_{1},a_{2},b_{1},b_{2},c)&=\sum_{y \in \mathbb{F}_{p}^{\ast}}\sum_{x \in \mathbb{F}_{p^{m}}}\big(\zeta_{p}^{yTr((a_{1}+a_{2})x^{2}+(b_{1}+b_{2})x^{p^{k}+1}+cx^{p^{2k}+1})}\\
 &\relphantom{=} \phantom{{} \sum_{y \in \mathbb{F}_{p}^{\ast}}\sum_{x \in \mathbb{F}_{p^{m}}}\big( }+\zeta_{p}^{yTr((a_{1}-a_{2}) x^{2}+(b_{1}-b_{2}) x^{p^{k}+1}+c x^{p^{2k}+1})}\big),
  \end{split}
 \end{equation}
 and $W(\mathbf{c}_{(a_{1},a_{2},b_{1},b_{2},c)})=p^{m}-p^{m-1}- \frac{1}{2p}T(a_{1},a_{2},b_{1},b_{2},c)$ when $k$ is odd, where
 \begin{equation}\label{Eq:3.2}
  \begin{split}
 T(a_{1},a_{2},b_{1},b_{2},c)&=\sum_{y \in \mathbb{F}_{p}^{\ast}}\sum_{x \in \mathbb{F}_{p^{m}}}\big(\zeta_{p}^{yTr((a_{1}+a_{2})x^{2}+(b_{1}+b_{2})x^{p^{k}+1}+cx^{p^{2k}+1})}\\
 &\relphantom{=} \phantom{{} \sum_{y \in \mathbb{F}_{p}^{\ast}}\sum_{x \in \mathbb{F}_{p^{m}}}\big( }+\zeta_{p}^{yTr((a_{1}-a_{2}) x^{2}-(b_{1}-b_{2}) x^{p^{k}+1}+c x^{p^{2k}+1})}\big).
   \end{split}
 \end{equation}
Based on the discussion above, the weight distribution of the code $\mathcal{C}_{(p,m,k)}$ is completely determined by the value distribution of $S(a_{1},a_{2},b_{1},b_{2},c)$ and $T(a_{1},a_{2},b_{1},b_{2},c)$. Before determining the value distribution of $S(a_{1},a_{2},b_{1},b_{2},c)$ and $T(a_{1},a_{2},b_{1},b_{2},c)$,
%Before doing this, we first give a notation.
for any $(u,v,w) \in \mathbb{F}_{p^m}^{2}$, we define
\begin{equation}\label{Eq:3.3}
D(u, v,w)=\sum_{y \in \mathbb{F}_{p}^{\ast}}\sum_{x \in \mathbb{F}_{p^{m}}}\zeta_{p}^{yQ_{u,v,w}(x)}=\sum_{y \in \mathbb{F}_{p}^{\ast}}\sum_{x \in \mathbb{F}_{p^{m}}}\zeta_{p}^{yTr(ux^{2}+vx^{p^{k}+1}+wx^{p^{2k}+1})}.
\end{equation}
From the discussion above, the value  distributions of $S(a_{1},a_{2},b_{1},b_{2},c)$ and $T(a_{1},a_{2},b_{1},b_{2},c)$ can be deduced from the value  distributions of $D(u, v,w)$.
In fact, the value  distributions of $D(u, v,0)$ can be obtained by Lemma 3.2 and 3.3 in \cite{16} as follows.
\begin{lemma}\label{Le:3.3}
Let $D(u,v,w)$ be defined by (\ref{Eq:3.3}). Then as $(u,v)$ runs through $\mathbb{F}_{p^{m}}^{2}$, the value distribution of $D(u,v,0)$ is given by Table \ref{T:3}.
\begin{table}[htbp]
\caption{Value distribution of $D(u,v,0)$}\label{T:3}
\centering
\begin{tabular}{cc}
 \hline
 Value& Frequency\\
 \hline
  0 & $(p^{m}-1)(p^{m}-p^{m-1}+1)$\\
  $(p-1)p^{m}$ & $1$\\
 $(p-1)p^{\frac{m+1}{2}}$ & $\frac{1}{2}(p^{m}-1)(p^{m-1}+p^{\frac{m-1}{2}})$\\
 $-(p-1)p^{\frac{m+1}{2}}$ &  $\frac{1}{2}(p^{m}-1)(p^{m-1}-p^{\frac{m-1}{2}})$\\
 \hline
\end{tabular}
\end{table}
\end{lemma}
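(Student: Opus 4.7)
The plan is to apply Lemma~\ref{Le:2.2} to the quadratic form $Q_{u,v,0}(x)=Tr(ux^{2}+vx^{p^{k}+1})$, so that $D(u,v,0)$ equals $0$ when the rank $r$ is odd and $\pm(p-1)p^{m-r/2}$ when $r$ is even. The trivial case $(u,v)=(0,0)$ corresponds to $r=0$ and yields the value $(p-1)p^{m}$, accounting for the row of frequency $1$.

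For $(u,v)\neq(0,0)$, I would invoke Lemmas~3.2 and~3.3 of~\cite{16}: under our standing hypotheses ($m$ odd and $\gcd(m,k)=1$), the rank of $Q_{u,v,0}$ is either odd or exactly $m-1$. Combined with Lemma~\ref{Le:2.2}, this forces $D(u,v,0)\in\{0,\pm(p-1)p^{(m+1)/2}\}$. Denote the respective frequencies by $f_{+}$, $f_{-}$, $f_{0}$; then $f_{+}+f_{-}+f_{0}=p^{2m}-1$, and it remains to determine $f_{+}$ and $f_{-}$.

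I would extract these by computing the first two moments of $D(u,v,0)$ over $(u,v)\in\mathbb{F}_{p^{m}}^{2}$. Interchanging the order of summation, orthogonality in $u$ and in $v$ collapses the first moment to the single term $x=0$, giving $(p-1)p^{2m}$; the same interchange reduces the second moment to $p^{2m}N_{2}=(p-1)^{2}p^{3m}$ by Lemma~\ref{Le:3.9}. Subtracting the contribution of $(u,v)=(0,0)$ produces the $2\times 2$ linear system
\[
(f_{+}-f_{-})\,p^{(m+1)/2}=p^{m}(p^{m}-1),\qquad (f_{+}+f_{-})\,p^{m+1}=p^{2m}(p^{m}-1),
\]
whose solution reproduces the two middle rows of the table; the frequency $f_{0}=(p^{m}-1)(p^{m}-p^{m-1}+1)$ then follows by subtraction.

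The main obstacle is the rank statement for $Q_{u,v,0}$: the generic estimate of Lemma~\ref{Le:2.3} only gives $\mathrm{rank}\ge m-4$, which a priori permits the even value $m-3$ and would contribute spurious entries $\pm(p-1)p^{(m+3)/2}$ to the table. Excluding this case is precisely the content of the cited Lemmas~3.2 and~3.3 of~\cite{16}, which I would import as a black box rather than reproduce via the linearized-polynomial analysis of the associated bilinear form.
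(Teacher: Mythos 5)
Your proof is correct. For this particular lemma the paper offers no argument at all --- it simply imports the result from Lemmas 3.2 and 3.3 of \cite{16} --- so there is no ``paper proof'' to match step for step; what you have written is a genuine derivation, and it is the natural one. Your two moment computations check out: interchanging summation gives $\sum_{u,v}D(u,v,0)=(p-1)p^{2m}$ and $\sum_{u,v}D^{2}(u,v,0)=p^{2m}N_{2}=(p-1)^{2}p^{3m}$ by Lemma~\ref{Le:3.9}, and after removing the contribution $(p-1)p^{m}$ of the origin the $2\times 2$ system yields exactly $n_{\epsilon}=\frac{1}{2}(p^{m}-1)(p^{m-1}+\varepsilon p^{\frac{m-1}{2}})$ and $n_{0}=(p^{m}-1)(p^{m}-p^{m-1}+1)$. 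You are also right that the only nontrivial input is the rank statement: the radical of $Q_{u,v,0}$ is cut out by a linearized equation of the shape $v^{p^{k}}x^{p^{2k}}+2u^{p^{k}}x^{p^{k}}+vx=0$, which for $\gcd(m,k)=1$ has at most $p^{2}$ roots, so the rank is at least $m-2$ and the only admissible even rank is $m-1$; citing \cite{16} for this is legitimate. One remark: you could dispense with that black box entirely by computing the third and fourth moments as well (via $N_{3}$ and $N_{4}$ from Lemmas~\ref{Le:3.10} and~\ref{Le:3.12}), exactly as the paper does in its proof of Lemma~\ref{Le:3.4} for $w\neq 0$; the resulting $5\times 5$ linear system is nonsingular and would return $n_{\pm 1,3}=0$ automatically. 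Your version trades two extra moment computations for one imported rank fact --- both are sound, and yours is shorter.
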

For the convenience, we introduce the following notation.
\[
n_{0}=\#\{(u,v)\in \mathbb{F}_{p^{m}}^{2}\mid D(u,v,0)=0\}
\]
and
\[
n_{\epsilon}=\#\{(u,v) \in \mathbb{F}_{p^{m}}^{2}: D(u,v,0)=\varepsilon(p-1)p^{\frac{m+1}{2}}\},
\]
where $\varepsilon=\pm1$. That is $n_{0}=(p^{m}-1)(p^{m}-p^{m-1}+1)$, $n_{\epsilon}=\frac{1}{2}(p^{m}-1)(p^{m-1}+\varepsilon p^{\frac{m-1}{2}})$.
%The following lemmas are very important to establish the value distribution of $S(a_{1},a_{2},b_{1},b_{2},c)$ and $T(a_{1},a_{2},b_{1},b_{2},c)$.
As for any fixed $w \in \mathbb{F}_{p^{m}}^{\ast}$, the value  distributions of $D(u, v,w)$ can be determined by the following lemma.
\begin{lemma}\label{Le:3.4}
Let $D(u,v,w)$ be defined by (\ref{Eq:3.3}). Then for any fixed $w \in \mathbb{F}_{p^{m}}^{\ast}$, as $(u,v)$ runs through $\mathbb{F}_{p^{m}}^{2}$, the value distribution of $D(u,v,w)$ is given by Table  \ref{T:1}.
\begin{table}[htbp]
\caption{Value distribution of $D(u,v,w)$ for any fixed $w \in \mathbb{F}_{p^{m}}^{\ast}$}\label{T:1}
\centering
\begin{tabular}{cc}
 \hline
 Value& Frequency\\
 \hline
  0 & $p^{2m}-p^{2m-1}+p^{2m-4}-p^{m-3}$\\
 $(p-1)p^{\frac{m+1}{2}}$ & $\frac{(p^{m+2}-p^{m}-p^{m-1}+1)(p^{m-1}+p^{\frac{m-1}{2}})}{2(p^{2}-1)}$\\
 $-(p-1)p^{\frac{m+1}{2}}$ & $\frac{(p^{m+2}-p^{m}-p^{m-1}+1)(p^{m-1}-p^{\frac{m-1}{2}})}{2(p^{2}-1)}$\\
 $(p-1)p^{\frac{m+3}{2}}$ &$\frac{(p^{m-1}-1)(p^{m-3}+p^{\frac{m-3}{2}})}{2(p^{2}-1)}$\\
 $-(p-1)p^{\frac{m+3}{2}}$ & $\frac{(p^{m-1}-1)(p^{m-3}-p^{\frac{m-3}{2}})}{2(p^{2}-1)}$\\
 \hline
\end{tabular}
\end{table}
\end{lemma}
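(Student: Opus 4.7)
The plan is first to use Lemmas \ref{Le:2.2} and \ref{Le:2.3} to pin down the possible values of $D(u,v,w)$, and then to determine their frequencies by a power-moment calculation that crucially exploits the remarks following Lemmas \ref{Le:3.9}, \ref{Le:3.10} and \ref{Le:3.12}.

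Since $w \neq 0$, Lemma \ref{Le:2.3} ensures that the rank $r$ of $Q_{u,v,w}$ satisfies $r \in \{m-4, m-3, m-2, m-1, m\}$ for every $(u,v) \in \mathbb{F}_{p^m}^2$, and Lemma \ref{Le:2.2} forces $D(u,v,w) = 0$ whenever $r$ is odd. Because $m$ is odd, the only even ranks in the admissible range are $r = m-1$ and $r = m-3$, which produce $D(u,v,w) = \pm (p-1)p^{(m+1)/2}$ and $\pm (p-1)p^{(m+3)/2}$, respectively. Hence $D(u,v,w)$ takes at most the five values listed in Table \ref{T:1}.

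Next I would compute the power moments $M_j(w) := \sum_{(u,v)\in \mathbb{F}_{p^m}^2} D(u,v,w)^j$ for $j = 0,1,2,3,4$. Expanding $D(u,v,w)^j$ as a sum over $(x_i,y_i) \in \mathbb{F}_{p^m}^j \times (\mathbb{F}_p^\ast)^j$ and interchanging the order of summation, the inner sum over $(u,v)$ collapses to $p^{2m}$ times the indicator of the system $\sum_i y_i x_i^2 = 0$ and $\sum_i y_i x_i^{d_1} = 0$, while a residual twist $\zeta_p^{wTr(\sum_i y_i x_i^{d_2})}$ survives. The decisive observation is that the remarks after Lemmas \ref{Le:3.9}, \ref{Le:3.10} and \ref{Le:3.12} assert that on the solution locus of the first two equations one automatically has $\sum_i y_i x_i^{d_2} = 0$, so the residual twist is identically $1$. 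It follows that $M_j(w) = p^{2m} N_j$ is independent of $w$; substituting the values of $N_2, N_3, N_4$ provided by the three lemmas (together with the direct evaluations $M_0 = p^{2m}$ and $M_1 = (p-1)p^{2m}$, the latter because the constraint forces $x_1 = 0$) yields five explicit numerical identities.

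Let $\nu_0$ denote the frequency of $0$ and let $\nu_\epsilon^{(i)}$, $i = 1,2$, $\epsilon = \pm 1$, denote the frequencies of $\epsilon (p-1)p^{(m+1)/2}$ and $\epsilon (p-1)p^{(m+3)/2}$, respectively. Writing $A = (p-1)p^{(m+1)/2}$, $B = (p-1)p^{(m+3)/2}$ and using $B = pA$, the identities for $M_0, M_2, M_4$ form a linear system in $\nu_0$ and the sums $\nu_+^{(i)} + \nu_-^{(i)}$, while those for $M_1, M_3$ form a $2 \times 2$ system in the differences $\nu_+^{(i)} - \nu_-^{(i)}$. Both subsystems have Vandermonde-type coefficient matrices in powers of $p$ and invert to the values recorded in Table \ref{T:1}. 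The main obstacle is the vanishing of the $w$-twist on the constraint locus: without this trivialization the moments would depend on $w$ and the solution-counting lemmas of Section 3 would not be directly applicable; granted this vanishing, the remainder of the argument is routine linear algebra.
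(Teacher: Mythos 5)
Your proposal is correct and follows essentially the same route as the paper: identify the possible values $\{0,\pm(p-1)p^{(m+1)/2},\pm(p-1)p^{(m+3)/2}\}$ via Lemmas \ref{Le:2.2} and \ref{Le:2.3}, trivialize the $w$-twist using the remarks after Lemmas \ref{Le:3.9}--\ref{Le:3.12}, and solve the resulting moment equations. (Your value $M_2=p^{2m}N_2=(p-1)^2p^{3m}$ is the correct one; the final line of Eq.\ (\ref{Eq:3.7}) in the paper contains a typo.)
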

\begin{proof}
As in Eq. (\ref{Eq:3.3}),
\begin{equation*}
D(u, v,w)=\sum_{y \in \mathbb{F}_{p}^{\ast}}\sum_{x \in \mathbb{F}_{p^{m}}}\zeta_{p}^{yQ_{u,v,w}(x)}.
\end{equation*} Then for any fixed $w \in \mathbb{F}_{p^{m}}^{\ast}$, by Lemma \ref{Le:2.2} and \ref{Le:2.3},  $D(u, v,w)$ takes on  the values from the set \{0, $\pm(p-1)p^{\frac{m+1}{2}}$, $\pm(p-1)p^{\frac{m+3}{2}}$\}.
To determine the distribution of $D(u,v,w)$ for any fixed $w \in \mathbb{F}_{p^{m}}^{\ast}$, we define
\[
n_{0}^{ w\ast}=\{\#\{(u,v) \in \mathbb{F}_{p^{m}}^{2}: D(u,v,w)=0\}
\]
and
\[
n_{\epsilon,i}^{w}=\#\{(u,v) \in \mathbb{F}_{p^{m}}^{2}: D(u,v,w)=\varepsilon(p-1)p^{\frac{m+i}{2}}\},
\]
where $\varepsilon=\pm1$, $i \in \{1,3\}$. From the following discussion, the number $n_{\epsilon}^{w}$  and $n_{0}^{ w\ast}$ is dependent of the choice of $w \in \mathbb{F}_{p^{m}}^{\ast}$. Hence in the following, for any fixed $w \in \mathbb{F}_{p^{m}}^{\ast}$,  we denote by $n_{\epsilon}$ and $n_{0}^{\ast}$ instead of $n_{\epsilon}^{w}$ and $n_{0}^{ w\ast}$.
Then we have
\begin{equation}\label{Eq:3.4}
\sum_{(u,v) \in \mathbb{F}_{p^{m}}^{2}}D(u,v)=(n_{1,1}-n_{-1,1})(p-1)p^{\frac{m+1}{2}}+(n_{1,3}-n_{-1,3})(p-1)p^{\frac{m+3}{2}},
\end{equation}
\begin{equation}\label{Eq:3.5}
\sum_{(u,v) \in \mathbb{F}_{p^{m}}^{2}}D^{2}(u,v)=(n_{1,1}+n_{-1,1})(p-1)^{2}p^{m+1}+(n_{1,3}+n_{-1,3})(p-1)^{2}p^{m+3},
\end{equation}
\begin{equation}\label{Eq:3.8}
\sum_{(u,v) \in \mathbb{F}_{p^{m}}^{2}}D(u,v)=(n_{1,1}-n_{-1,1})(p-1)^{3}p^{\frac{3(m+1)}{2}}+(n_{1,3}-n_{-1,3})(p-1)^{3}p^{\frac{3(m+3)}{2}}
\end{equation}
and
\begin{equation}\label{Eq:3.9}
\sum_{(u,v) \in \mathbb{F}_{p^{m}}^{2}}D^{2}(u,v)=(n_{1,1}+n_{-1,1})(p-1)^{4}p^{2(m+1)}+(n_{1,3}+n_{-1,3})(p-1)^{4}p^{2(m+3)}.
\end{equation}
On the other hand, it follows from (\ref{Eq:3.3}) that
\begin{equation}\label{Eq:3.6}
\begin{split}
\sum_{u,v \in \mathbb{F}_{p^{m}}}D(u,v,w)&=\sum_{u,v \in \mathbb{F}_{p^{m}}}\sum_{y \in \mathbb{F}_{p}^{\ast}}\sum_{x \in \mathbb{F}_{p^{m}}}\zeta_{p}^{yTr(ux^{2}+vx^{p^{k}+1}+wx^{p^{2k}+1})}\\
&=\sum_{y \in \mathbb{F}_{p}^{\ast}}\sum_{x \in \mathbb{F}_{p^{m}}}\zeta_{p}^{yTr(wx^{p^{2k}+1})}\sum_{u \in \mathbb{F}_{p^{m}}}\zeta_{p}^{yTr(ux^{2})}\sum_{v \in \mathbb{F}_{p^{m}}}\zeta_{p}^{yTr(vx^{p^{k}+1})}\\
&=(p-1)p^{2m}.
 \end{split}
\end{equation}
By Lemmas \ref{Le:3.9}-\ref{Le:3.12} and the remarks followed, we have the following results.
\begin{equation}\label{Eq:3.7}
\begin{split}
&\sum_{u,v \in \mathbb{F}_{p^{m}}}D^{2}(u,v,w)\\
&=\sum_{u,v \in \mathbb{F}_{p^{m}}}\sum_{y_{1},y_{2} \in \mathbb{F}_{p}^{\ast}}\sum_{x_{1},x_{2} \in \mathbb{F}_{p^{m}}}\zeta_{p}^{y_{1}Tr(ux_{1}^{2}+vx_{1}^{p^{k}+1}+wx_{1}^{p^{2k}+1})}
\zeta_{p}^{y_{2}Tr(ux_{2}^{2}+vx_{2}^{p^{k}+1}+wx_{2}^{p^{2k}+1})}\\
&=\sum_{(y_{1},y_{2}) \in \mathbb{F}_{p}^{\ast 2}}\sum_{(x_{1},x_{2}) \in \mathbb{F}_{p^{m}}^{2}}\zeta_{p}^{Tr(w(y_{1}x_{1}^{p^{2k}+1}+y_{2}x_{2}^{p^{2k}+1}))}\cdot\\
&\relphantom{=}{}\sum_{u \in \mathbb{F}_{p^{m}}}\zeta_{p}^{Tr(u(y_{1}x_{1}^{2}+y_{2}x_{2}^{2}))}\sum_{v \in \mathbb{F}_{p^{m}}}\zeta_{p}^{Tr(v(y_{1}x_{1}^{p^{k}+1}+y_{2}x_{2}^{p^{k}+1}))}\\
&=p^{2m}\cdot \#\{(x_{1},x_{2},y_{1},y_{2}) \in \mathbb{F}_{p^{m}}^{2}\times \mathbb{F}_{p}^{\ast2}\mid y_{1}x_{1}^{2}+ y_{2}x_{2}^{2}=0,
y_{1}x_{1}^{d_{1}}+y_{2}x_{2}^{d_{1}}=0\}\\
&=(p-1)^{2}p^{2m},
\end{split}
\end{equation}
\begin{equation}\label{Eq:3.10}
\begin{split}
&\sum_{u,v \in \mathbb{F}_{p^{m}}}D^{3}(u,v,w)\\
&=\sum_{(y_{1},y_{2},y_{3}) \in \mathbb{F}_{p}^{\ast 2}}\sum_{(x_{1},x_{2},x_{3}) \in \mathbb{F}_{p^{m}}^{2}}\zeta_{p}^{Tr(w(y_{1}x_{1}^{p^{2k}+1}+y_{2}x_{2}^{p^{2k}+1}+y_{3}x_{3}^{p^{2k}+1}))}\cdot\\
&\relphantom{=}{}\sum_{u \in \mathbb{F}_{p^{m}}}\zeta_{p}^{Tr(u(y_{1}x_{1}^{2}+y_{2}x_{2}^{2}+y_{3}x_{3}^{2}))}\sum_{v \in \mathbb{F}_{p^{m}}}\zeta_{p}^{Tr(v(y_{1}x_{1}^{p^{k}+1}+y_{2}x_{2}^{p^{k}+1}+y_{3}x_{3}^{p^{k}+1}))}\\
&=p^{2m}\cdot \#\{(x_{1},x_{2},x_{3},y_{1},y_{2},y_{3}) \in \mathbb{F}_{p^{m}}^{3}\times \mathbb{F}_{p}^{\ast3}\mid y_{1}x_{1}^{2}+y_{2}x_{2}^{2}+y_{3} x_{3}^{2}=0,\\
&\relphantom{=}{}y_{1}x_{1}^{d_{1}}+y_{2}x_{2}^{d_{1}}+y_{3} x_{3}^{d_{1}}=0.\}\\
&=p^{2m}(p-1)^{3}(p^{m+1}+p^{m}-p)
\end{split}
\end{equation}
and
\begin{equation}\label{Eq:3.11}
\begin{split}
&\sum_{u,v \in \mathbb{F}_{p^{m}}}D^{4}(u,v,w)\\
&=\sum_{(y_{1},y_{2},y_{3},y_{4}) \in \mathbb{F}_{p}^{\ast 2}}\sum_{(x_{1},x_{2},x_{3},x_{4}) \in \mathbb{F}_{p^{m}}^{2}}\zeta_{p}^{Tr(w(y_{1}x_{1}^{p^{2k}+1}+y_{2}x_{2}^{p^{2k}+1}+y_{3}x_{3}^{p^{2k}+1}+y_{4}x_{4}^{p^{2k}+1}))}\cdot\\
&\relphantom{=}{}\sum_{u \in \mathbb{F}_{p^{m}}}\zeta_{p}^{Tr(u(y_{1}x_{1}^{2}+y_{2}x_{2}^{2}+y_{3}x_{3}^{2}+y_{4}x_{4}^{2}))}\sum_{v \in \mathbb{F}_{p^{m}}}\zeta_{p}^{Tr(v(y_{1}x_{1}^{p^{k}+1}+y_{2}x_{2}^{p^{k}+1}+y_{3}x_{3}^{p^{k}+1}+y_{4}x_{4}^{p^{k}+1}))}\\
&=p^{2m}\cdot \#\{(x_{1},x_{2},x_{3},x_{4},y_{1},y_{2},y_{3},y_{4})\in \mathbb{F}_{p^{m}}^{4}\times \mathbb{F}_{p}^{\ast4}\mid y_{1}x_{1}^{2}+y_{2}x_{2}^{2}+ y_{3}x_{3}^{2}+y_{4} x_{4}^{2}=0,\\
&\relphantom{=}{}
y_{1}x_{1}^{d_{1}}+y_{2}x_{2}^{d_{1}}+y_{3} x_{3}^{d_{1}}+ y_{4}x_{4}^{d_{1}}=0.\}\\
&=p^{3m}(p-1)^{4}(p^{m+1}+p^{m}-p).
\end{split}
\end{equation}
Combining Eqs. (\ref{Eq:3.4})-(\ref{Eq:3.11}), we have
\[
\begin{split}
&n_{1,1}=\frac{(p^{m+2}-p^{m}-p^{m-1}+1)(p^{m-1}+p^{\frac{m-1}{2}})}{2(p^{2}-1)},\\
&n_{-1,1}=\frac{(p^{m+2}-p^{m}-p^{m-1}+1)(p^{m-1}-p^{\frac{m-1}{2}})}{2(p^{2}-1)},\\
&n_{1,3}=\frac{(p^{m-1}-1)(p^{m-3}+p^{\frac{m-3}{2}})}{2(p^{2}-1)},\\
&n_{-1,3}=\frac{(p^{m-1}-1)(p^{m-3}-p^{\frac{m-3}{2}})}{2(p^{2}-1)}.
\end{split}
\]
Then we have $n_{0}^{\ast}=p^{2m}-n_{1,1}-n_{-1,1}-n_{1,3}-n_{-1,3}=p^{2m}-p^{2m-1}+p^{2m-4}-p^{m-3}$.
\end{proof}
The value distribution of $S(a_{1},a_{2},b_{1},b_{2},c)$ is determined by the following theorem.

\begin{theorem}\label{Th:3.5}
Let $k$ be even and  $S(a_{1},a_{2},b_{1},b_{2},c)$ be defined by (\ref{Eq:3.1}). Then as $(a_{1},a_{2},b_{1},$
$b_{2},c)$ runs through $\mathbb{F}_{p^{m}}^{5}$, the value distribution of $S(a_{1},a_{2},b_{1},b_{2},c)$ is given by Table \ref{T:8}.
\end{theorem}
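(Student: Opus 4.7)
The plan is to reduce the value distribution of $S(a_1,a_2,b_1,b_2,c)$ to a convolution of the value distributions of $D(\cdot,\cdot,c)$ already obtained in Lemmas \ref{Le:3.3} and \ref{Le:3.4}. The key observation is that the linear map
$$(a_1,a_2,b_1,b_2)\longmapsto(u_1,u_2,v_1,v_2):=(a_1+a_2,\; a_1-a_2,\; b_1+b_2,\; b_1-b_2)$$
is a bijection on $\mathbb{F}_{p^m}^{4}$, since its Jacobian determinant equals $4$, a unit in $\mathbb{F}_p$ because $p$ is odd. Applying this change of variables to (\ref{Eq:3.1}) gives, for every $c$,
$$S(a_1,a_2,b_1,b_2,c)\;=\;D(u_1,v_1,c)+D(u_2,v_2,c),$$
where the two pairs $(u_1,v_1)$ and $(u_2,v_2)$ range independently over $\mathbb{F}_{p^m}^{2}$ as $(a_1,a_2,b_1,b_2)$ ranges over $\mathbb{F}_{p^m}^{4}$.

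Next I would split the count according to whether $c=0$ or $c\in\mathbb{F}_{p^m}^{\ast}$. In the former case, Lemma \ref{Le:3.3} gives the value distribution of $D(u,v,0)$ on four values with explicit frequencies. In the latter, Lemma \ref{Le:3.4} gives the distribution on five values with frequencies that are \emph{independent} of the particular $w=c\in\mathbb{F}_{p^m}^{\ast}$ chosen. Hence the frequency with which $S$ takes a value $s$ is
$$F(s)\;=\;\sum_{s_1+s_2=s}n^{(0)}_{s_1}\,n^{(0)}_{s_2}\;+\;(p^m-1)\sum_{s_1+s_2=s}n^{(\ast)}_{s_1}\,n^{(\ast)}_{s_2},$$
where $n^{(0)}_{\bullet}$ and $n^{(\ast)}_{\bullet}$ are the frequencies read off Tables \ref{T:3} and \ref{T:1}, respectively.

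The remaining work is bookkeeping: enumerate the ordered pairs of values from each table (at most $4\cdot 4$ and $5\cdot 5$ respectively), collect terms by their sum, and simplify the resulting polynomials in $p$, $p^{(m\pm 1)/2}$ and $p^{(m\pm 3)/2}$ until they match the entries of Table \ref{T:8}. The main obstacle will be identifying any coincidences between distinct sums $s_1+s_2$ (for instance $(p-1)p^m+0$ versus $2(p-1)p^{(m+1)/2}$), since these would collapse rows of the table and alter the frequencies; the hypothesis that $m\ge 5$ is odd should prevent such collisions for the values listed. As consistency checks, I would verify that $\sum_s F(s)=p^{5m}$ and that the low-order moments of $F$ agree with the moments of $S$ obtained directly from (\ref{Eq:3.1}) via Lemmas \ref{Le:3.9}--\ref{Le:3.12}, since these are precisely the identities that drive the proof of Lemma \ref{Le:3.4} itself.
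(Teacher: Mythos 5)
Your proposal is correct and follows essentially the same route as the paper: write $S(a_{1},a_{2},b_{1},b_{2},c)=D(a_{1}+a_{2},b_{1}+b_{2},c)+D(a_{1}-a_{2},b_{1}-b_{2},c)$, use the invertible change of variables to make the two pairs independent, split on $c=0$ versus $c\in\mathbb{F}_{p^{m}}^{\ast}$, and convolve the frequencies from Tables \ref{T:3} and \ref{T:1} (the paper likewise uses $\sum F(s)=p^{5m}$ as the consistency check). Your explicit attention to possible coincidences among the sums $s_{1}+s_{2}$ is a point the paper passes over silently, but it resolves as you expect since $p$ is odd and $m\geq 5$.
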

\newpage
\pagestyle{plain}
\begin{table}[htbp]
\caption{Value Distribution of $S(a_{1},a_{2},b_{1},b_{2},c)$}\label{T:8}
\centering
\begin{tabular}{ll}
 \hline
 Value& Frequency\\
 \hline
 $2(p-1)p^{m}$ &1\\
 $(p-1)p^{m}$ &  $2n_{0}$\\
  $(p-1)(p^{m}-p^{\frac{m+1}{2}})$ &  $2n_{-1}$\\
  $(p-1)(p^{m}+p^{\frac{m+1}{2}})$ &  $2n_{1}$\\
 $(p-1)p^{\frac{m+1}{2}}$ & $2n_{0}n_{1}+2(p^{m}-1)n_{0}^{\ast}n_{1,1}$\\
  $(p-1)p^{\frac{m+3}{2}}$ & $2(p^{m}-1)n_{0}^{\ast}n_{1,3}$\\
  $2(p-1)p^{\frac{m+1}{2}}$ & $n_{1}^{2}+(p^{m}-1)n_{1,1}^{2}$\\
  $2(p-1)p^{\frac{m+3}{2}}$ & $(p^{m}-1)n_{1,3}^{2}$\\
 $-(p-1)p^{\frac{m+1}{2}}$ & $2n_{0}n_{-1}+2(p^{m}-1)n_{0}^{\ast}n_{-1,1}$\\

 $-(p-1)p^{\frac{m+3}{2}}$ & $2(p^{m}-1)n_{0}^{\ast}n_{-1,3}$\\
 $-2(p-1)p^{\frac{m+1}{2}}$ & $n_{-1}^{2}+(p^{m}-1)n_{-1,1}^{2}$\\
 $-2(p-1)p^{\frac{m+3}{2}}$ & $(p^{m}-1)n_{-1,3}^{2}$\\
 $(p-1)(p^{\frac{m+1}{2}}+p^{\frac{m+3}{2}})$ &  $2(p^{m}-1)n_{1,1}n_{1,3}$\\
 $(p-1)(p^{\frac{m+1}{2}}-p^{\frac{m+3}{2}})$ &  $2(p^{m}-1)n_{1,1}n_{-1,3}$\\
 $(p-1)(-p^{\frac{m+1}{2}}+p^{\frac{m+3}{2}})$ &  $2(p^{m}-1)n_{-1,1}n_{1,3}$\\
 $(p-1)(-p^{\frac{m+1}{2}}-p^{\frac{m+3}{2}})$ &  $2(p^{m}-1)n_{-1,1}n_{-1,3}$\\
 $0$ & \tabincell{l}{$n_{0}^{2}+2n_{1}n_{-1}+(p^{m}-1)n_{0}^{\ast2}$\\$+2(p^{m}-1)(n_{1,1}n_{-1,1}+n_{1,3}n_{-1,3})$}\\
 \hline
\end{tabular}
\end{table}
\begin{proof}
By Lemma \ref{Le:3.3} and \ref{Le:3.4}, $S(a_{1},a_{2},b_{1},b_{2},c)$ takes values from the set $\{0, (p-1)p^{m}, 2(p-1)p^{m}, \pm(p-1)p^{\frac{m+1}{2}}, (p-1)(p^{m}\pm p^{\frac{m+1}{2}}), \pm2(p-1)p^{\frac{m+1}{2}}, \pm(p-1)p^{\frac{m+3}{2}}, \pm2(p-1)p^{\frac{m+3}{2}},(p-1)(p^{\frac{m+1}{2}}\pm p^{\frac{m+3}{2}}), (p-1)(-p^{\frac{m+1}{2}}\pm p^{\frac{m+3}{2}})\}$. The distribution of $S(a,b,c)=(p-1)p^{m}$, $2(p-1)p^{m}$ or $(p-1)(p^{m}\pm p^{\frac{m+1}{2}})$ can be easily obtained by Lemma \ref{Le:3.3}.
\[
\begin{split}
&\#\{(a_{1},a_{2},b_{1},b_{2},c) \in \mathbb{F}_{p^{m}}^{5}: S(a_{1},a_{2},b_{1},b_{2},c)=0\}\\
&=\#\{(u_{1},u_{2},v_{1},v_{2},c) \in \mathbb{F}_{p^{m}}^{5}: D(u_{1},v_{1},c)+D(u_{2},v_{2},c)=0\}\\
&=\#\{(u_{1},u_{2},v_{1},v_{2}) \in \mathbb{F}_{p^{m}}^{4}: D(u_{1},v_{1},0)+D(u_{2},v_{2},0)=0\}\\
&\relphantom{=} {}+\#\{(u_{1},u_{2},v_{1},v_{2}) \in \mathbb{F}_{p^{m}}^{4},c\in \mathbb{F}_{p^{m}}^{\ast}: D(u_{1},v_{1},c)+D(u_{2},v_{2},c)=0\}\\
&=n_{0}^{2}+2n_{1}n_{-1}+(p^{m}-1)n_{0}^{\ast2}+2(p^{m}-1)(n_{1,1}n_{-1,1}+n_{1,3}n_{-1,3}).
\end{split}
\]
Similarly, we have
\[
\begin{split}
&\#\{(a_{1},a_{2},b_{1},b_{2},c) \in \mathbb{F}_{p^{m}}^{5}: S(a_{1},a_{2},b_{1},b_{2},c)=(p-1)p^{\frac{m+1}{2}}\}=2n_{0}n_{1}+2(p^{m}-1)n_{0}^{\ast}n_{1,1},\\
&\#\{(a_{1},a_{2},b_{1},b_{2},c) \in \mathbb{F}_{p^{m}}^{5}: S(a_{1},a_{2},b_{1},b_{2},c)=-(p-1)p^{\frac{m+1}{2}}\}=2n_{0}n_{-1},\\
&\relphantom{=} {}+2(p^{m}-1)n_{0}^{\ast}n_{-1,1},\\
&\#\{(a_{1},a_{2},b_{1},b_{2},c) \in \mathbb{F}_{p^{m}}^{5}: S(a_{1},a_{2},b_{1},b_{2},c)=2(p-1)p^{\frac{m+1}{2}}\}=n_{1}^{2}+(p^{m}-1)n_{1,1}^{2},\\
&\#\{(a_{1},a_{2},b_{1},b_{2},c) \in \mathbb{F}_{p^{m}}^{5}: S(a_{1},a_{2},b_{1},b_{2},c)=2(p-1)p^{\frac{m+1}{2}}\}=n_{-1}^{2}+(p^{m}-1)n_{-1,1}^{2},\\
\end{split}
\]
\[
\begin{split}
&\#\{(a_{1},a_{2},b_{1},b_{2},c) \in \mathbb{F}_{p^{m}}^{5}: S(a_{1},a_{2},b_{1},b_{2},c)=(p-1)p^{\frac{m+3}{2}}\}=2(p^{m}-1)n_{0}^{\ast}n_{1,3},\\
&\#\{(a_{1},a_{2},b_{1},b_{2},c) \in \mathbb{F}_{p^{m}}^{5}: S(a_{1},a_{2},b_{1},b_{2},c)=-(p-1)p^{\frac{m+3}{2}}\}=2(p^{m}-1)n_{0}^{\ast}n_{-1,3},\\
&\#\{(a_{1},a_{2},b_{1},b_{2},c) \in \mathbb{F}_{p^{m}}^{5}: S(a_{1},a_{2},b_{1},b_{2},c)=2(p-1)p^{\frac{m+3}{2}}\}=(p^{m}-1)n_{1,3}^{2},\\
&\#\{(a_{1},a_{2},b_{1},b_{2},c) \in \mathbb{F}_{p^{m}}^{5}: S(a_{1},a_{2},b_{1},b_{2},c)=-2(p-1)p^{\frac{m+3}{2}}\}=(p^{m}-1)n_{-1,3}^{2},\\
&\#\{(a_{1},a_{2},b_{1},b_{2},c) \in \mathbb{F}_{p^{m}}^{5}: S(a_{1},a_{2},b_{1},b_{2},c)=(p-1)(p^{\frac{m+1}{2}}+p^{\frac{m+3}{2}})\\
&=2(p^{m}-1)n_{1,1}n_{1,3},\\
&\#\{(a_{1},a_{2},b_{1},b_{2},c) \in \mathbb{F}_{p^{m}}^{5}: S(a_{1},a_{2},b_{1},b_{2},c)=(p-1)(p^{\frac{m+1}{2}}-p^{\frac{m+3}{2}})\}\\
&=2(p^{m}-1)n_{1,1}n_{-1,3},\\
&\#\{(a_{1},a_{2},b_{1},b_{2},c) \in \mathbb{F}_{p^{m}}^{5}: S(a_{1},a_{2},b_{1},b_{2},c)=(p-1)(-p^{\frac{m+1}{2}}+p^{\frac{m+3}{2}})\}\\
&=2(p^{m}-1)n_{-1,1}n_{1,3},\\
&\#\{(a_{1},a_{2},b_{1},b_{2},c) \in \mathbb{F}_{p^{m}}^{5}: S(a_{1},a_{2},b_{1},b_{2},c)=(p-1)(-p^{\frac{m+1}{2}}-p^{\frac{m+3}{2}})\}\\
&=2(p^{m}-1)n_{-1,1}n_{-1,3}.
\end{split}
\]
\end{proof}
By calculation, the sum of  all the numbers in the right side of Table \ref{T:8}  is exactly $p^{5m}$, which can be used to  verify the correctness of the theorem above.
\remark  Following the notation above, we have
$T(a_{1},a_{2},b_{1},b_{2},c)=D(a_{1}+a_{2},b_{1}+b_{2},c)+D(a_{1}+a_{2},-(b_{1}-b_{2}),c)$. It can be shown that  the value distribution of $T(a_{1},a_{2},$
$b_{1},b_{2},c)$ in the case of $k$ is odd is the same as the value distribution of $S(a_{1},a_{2},b_{1},b_{2},c)$ in the case of $k$ is even.

\noindent The following is the main result of this paper.
\begin{theorem}\label{Th:3.6}
Let $m$ and $k$ be any two  positive integers such that $m \geq 5$ is odd, then
$\mathcal{C}_{(p,m,k)}$ is a  cyclic code over $\mathbb{F}_{p}$ with parameters $[p^{m}-1, 5m, \frac{1}{2}(p-1)(p^{m-1}-p^{\frac{m+1}{2}})]$. Furthermore, the weight distribution of $\mathcal{C}_{(p,m,k)}$ is given by Table \ref{T:2}.
\end{theorem}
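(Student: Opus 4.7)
The approach is to combine the length/dimension count with the translation of Theorem \ref{Th:3.5} into weight data. Nothing essentially new is needed: the final theorem is a packaging of material already in place.

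First I will pin down the basic parameters. The length is $p^{m}-1$ because every root of the parity-check polynomial $h_{0}h_{-0}h_{1}h_{-1}h_{2}$ is a power of the primitive element $\pi$ of $\mathbb{F}_{p^{m}}$, so this product divides $x^{p^{m}-1}-1$. The dimension equals the degree of the parity-check polynomial; since each of the five factors was already established to be an irreducible polynomial of degree $m$, and they are pairwise distinct (hence pairwise coprime), their product has degree $5m$.

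Next I will deduce the weight distribution from the value distribution of the exponential sums. The identity derived immediately above Theorem \ref{Th:3.5} gives
\[
W(\mathbf{c}_{(a_{1},a_{2},b_{1},b_{2},c)}) = p^{m}-p^{m-1}-\frac{1}{2p}\,E(a_{1},a_{2},b_{1},b_{2},c),
\]
with $E=S$ when $k$ is even and $E=T$ when $k$ is odd. Theorem \ref{Th:3.5} gives the value distribution of $S$ in the even case, and the remark following it records that $T$ has the identical distribution in the odd case. Applying the affine bijection $v\mapsto p^{m}-p^{m-1}-v/(2p)$ to each value in Table \ref{T:8}, while preserving its frequency, transcribes Table \ref{T:8} into the weight distribution listed in Table \ref{T:2}.

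Finally I will determine the minimum distance. The value $S=2(p-1)p^{m}$ of frequency $1$ corresponds to $W=0$, i.e.\ the all-zero codeword. The minimum positive weight then comes from the largest remaining value of $S$ in Table \ref{T:8}, and a short comparison among the seventeen candidates (using $p\ge 3$ and $m\ge 5$, which makes $p^{m}$ dominate $p^{(m+3)/2}$) picks out $(p-1)\bigl(p^{m}+p^{(m+1)/2}\bigr)$. Substituting into the weight formula yields the claimed minimum distance, and its frequency $2n_{1}>0$ guarantees that it is attained. The main obstacle is nothing conceptual but the careful bookkeeping across all seventeen value classes of $S$; the genuine analytical work (rank analysis of the quadratic forms $Q_{u,v,w}$, the exponential-sum evaluation of Lemma \ref{Le:2.2}, and the counts $N_{2},N_{3},N_{4}$ from Section 3) is already folded into Theorem \ref{Th:3.5}. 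As a consistency check, the frequencies in Table \ref{T:2} should sum to $p^{5m}$, matching the total number of tuples $(a_{1},a_{2},b_{1},b_{2},c)\in\mathbb{F}_{p^{m}}^{5}$.
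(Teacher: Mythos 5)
Your proposal is correct and follows essentially the same route as the paper, whose own proof is just the observation that everything follows from Eqs.~(\ref{Eq:3.1})--(\ref{Eq:3.2}), Theorem~\ref{Th:3.5} and the remark on $T$; your write-up merely makes the length/dimension count and the affine transcription of Table~\ref{T:8} explicit. One caveat: if you actually carry out the final substitution you defer, the largest nonmaximal value $S=(p-1)\bigl(p^{m}+p^{\frac{m+1}{2}}\bigr)$ gives
\[
W=(p-1)p^{m-1}-\tfrac{1}{2}(p-1)\bigl(p^{m-1}+p^{\frac{m-1}{2}}\bigr)=\tfrac{1}{2}(p-1)\bigl(p^{m-1}-p^{\frac{m-1}{2}}\bigr),
\]
not the value $\tfrac{1}{2}(p-1)\bigl(p^{m-1}-p^{\frac{m+1}{2}}\bigr)$ stated in the theorem and in the corresponding row of Table~\ref{T:2}; the exponent $\frac{m+1}{2}$ there appears to be a typo for $\frac{m-1}{2}$ (note the correctly transcribed companion row $\tfrac{1}{2}(p-1)(p^{m-1}+p^{\frac{m-1}{2}})$ with frequency $2n_{-1}$). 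So your method and your identification of the extremal $S$-value are right, but your assertion that the substitution ``yields the claimed minimum distance'' does not literally hold as the statement is printed.
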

\begin{proof}
The length and dimension of $\mathcal{C}_{(p,m,k)}$ follow directly from its definition. The minimal weight and weight distribution of $\mathcal{C}_{(p,m,k)}$ follow from Eqs. (\ref{Eq:3.1}) and  (\ref{Eq:3.2}), Theorem \ref{Th:3.5} and the Remark above.
\end{proof}
\remark The weight distribution of $\mathcal{C}_{(p,m,k)}$ can be determined in a similar way in a more general case where $m/ gcd(m,k)\geq 5$ is odd. We omit the details in order to
avoid duplication.

\newpage
\pagestyle{plain}
\begin{table}[htbp]
\caption{Weight Distribution of $\mathcal{C}_{(p,m,k)}$}\label{T:2}
\centering
\begin{tabular}{ll}
 \hline
 Weight& Frequency\\
 \hline
  $0$ &1\\
 $\frac{1}{2}(p-1)p^{m-1}$ &  $2n_{0}$\\
  $\frac{1}{2}(p-1)(p^{m-1}+p^{\frac{m-1}{2}})$ &  $2n_{-1}$\\
  $\frac{1}{2}(p-1)(p^{m-1}-p^{\frac{m+1}{2}})$ &  $2n_{1}$\\
 $(p-1)(p^{m-1}-\frac{1}{2}p^{\frac{m-1}{2}})$ & $2n_{0}n_{1}+2(p^{m}-1)n_{0}^{\ast}n_{1,1}$\\
  $(p-1)(p^{m-1}-\frac{1}{2}p^{\frac{m+1}{2}})$ & $2(p^{m}-1)n_{0}^{\ast}n_{1,3}$\\
  $(p-1)(p^{m-1}-p^{\frac{m-1}{2}})$ & $n_{1}^{2}+(p^{m}-1)n_{1,1}^{2}$\\
  $(p-1)(p^{m-1}-p^{\frac{m+1}{2}})$ & $(p^{m}-1)n_{1,3}^{2}$\\
 $(p-1)(p^{m-1}+\frac{1}{2}p^{\frac{m-1}{2}})$ & $2n_{0}n_{-1}+2(p^{m}-1)n_{0}^{\ast}n_{-1,1}$\\

 $(p-1)(p^{m-1}+\frac{1}{2}p^{\frac{m+1}{2}})$& $2(p^{m}-1)n_{0}^{\ast}n_{-1,3}$\\
 $(p-1)(p^{m-1}+p^{\frac{m-1}{2}})$ & $n_{-1}^{2}+(p^{m}-1)n_{-1,1}^{2}$\\
 $(p-1)(p^{m-1}-p^{\frac{m+1}{2}})$ & $(p^{m}-1)n_{-1,3}^{2}$\\
$(p-1)(p^{m-1}-\frac{1}{2}p^{\frac{m-1}{2}}-\frac{1}{2}p^{\frac{m+1}{2}})$ &  $2(p^{m}-1)n_{1,1}n_{1,3}$\\
 $(p-1)(p^{m-1}-\frac{1}{2}p^{\frac{m-1}{2}}+\frac{1}{2}p^{\frac{m+1}{2}})$ &  $2(p^{m}-1)n_{1,1}n_{-1,3}$\\
 $(p-1)(p^{m-1}+\frac{1}{2}p^{\frac{m-1}{2}}-\frac{1}{2}p^{\frac{m+1}{2}})$ &  $2(p^{m}-1)n_{-1,1}n_{1,3}$\\
 $(p-1)(p^{m-1}+\frac{1}{2}p^{\frac{m-1}{2}}+\frac{1}{2}p^{\frac{m+1}{2}})$ &  $2(p^{m}-1)n_{-1,1}n_{-1,3}$\\
 $(p-1)p^{m-1}$ & \tabincell{l}{$n_{0}^{2}+2n_{1}n_{-1}+(p^{m}-1)n_{0}^{\ast2}$\\$+2(p^{m}-1)(n_{1,1}n_{-1,1}+n_{1,3}n_{-1,3})$}\\
 \hline
\end{tabular}
\end{table}

\end{document}